
\documentclass[aps,prl,showpacs,twocolumn,nofootinbib,10pt, superscriptaddress]{revtex4-1}
\pdfoutput=1
\usepackage{graphicx}
\usepackage{amsmath}
\usepackage{amssymb}
\usepackage{mathrsfs}
\usepackage{amsthm}
\usepackage{bm}
\usepackage{url}
\usepackage{csquotes}
\MakeOuterQuote{"}


\newtheoremstyle{note}
  {\topsep}               
  {\topsep}               
  {}                      
  {\parindent}            
  {\itshape}              
  {.}                     
  {5pt plus 1pt minus 1pt}
  {}

\theoremstyle{note}
\newtheorem{theorem}{Theorem}
\newtheorem{lemma}{Lemma}

\newtheorem{proposition}{Proposition}

\theoremstyle{definition}

\theoremstyle{remark}
\newtheorem{remark}{Remark}


\def\vec#1{\bm{#1}} 

\providecommand{\tr}{\operatorname{tr}}
\providecommand{\Tr}{\operatorname{Tr}}

\newcommand{\rmT}{\mathrm{T}}


\newcommand{\be}{\begin{equation}}
\newcommand{\ee}{\end{equation}}
\newcommand{\ba}{\begin{align}}
\newcommand{\ea}{\end{align}}

\def\<{\langle}  
\def\>{\rangle}  

\newcommand{\dket}[1]{| #1\>\!\>}
\newcommand{\Dket}[1]{\Bigl| #1\Bigr\>\!\Bigr\>}
\newcommand{\dbra}[1]{\<\!\< #1|}
\newcommand{\Dbra}[1]{\Bigl\<\!\Bigl\< #1\Bigr|}

\newcommand{\dinner}[2]{\<\!\< #1| #2\>\!\>}

\newcommand{\douter}[2]{| #1\>\!\>\<\!\< #2|}







\newcommand{\barcal}[1]{\bar{\mathcal{#1}}}

\newcommand{\bid}{\bar{\mathbf{I}}}




\def\eqref#1{\textup{(\ref{#1})}}  
\newcommand{\eref}[1]{Eq.~\textup{(\ref{#1})}}

\newcommand{\esref}[1]{Eqs.~\textup{(\ref{#1})}}

\newcommand{\fref}[1]{Fig.~\ref{#1}}
\newcommand{\Fref}[1]{Figure~\ref{#1}}

\newcommand{\thref}[1]{Theorem~\ref{#1}}

\newcommand{\lref}[1]{Lemma~\ref{#1}}
\newcommand{\Lref}[1]{Lemma~\ref{#1}}

\newcommand{\pref}[1]{Proposition~\ref{#1}}

\newcommand{\psref}[1]{Propositions~\ref{#1}}

\newcommand{\cref}[1]{Conjecture~\ref{#1}}
\newcommand{\Cref}[1]{Conjecture~\ref{#1}}

\newcommand{\rcite}[1]{Ref.~\cite{#1}}
\newcommand{\rscite}[1]{Refs.~\cite{#1}}


\begin{document}

\title{Universal steering inequalities }
\author{Huangjun Zhu}
\email{Corresponding Author: hzhu@pitp.ca}
\affiliation{Perimeter Institute for Theoretical Physics, Waterloo, Ontario
N2L 2Y5, Canada}
\affiliation{Institute for Theoretical Physics, University of Cologne, Cologne
 50937, Germany}

\author{Masahito Hayashi}
\email{masahito@math.nagoya-u.ac.jp}
\affiliation{Graduate School of Mathematics, Nagoya University, Nagoya 464-0814,
Japan}
\affiliation{Centre for Quantum Technologies, National University of Singapore,
Singapore 117543, Singapore}

\author{Lin Chen}
\email{linchen@buaa.edu.cn}
\affiliation{School of Mathematics and Systems Science, Beihang University,
Beijing 100191, China}
\affiliation{International Research Institute for Multidisciplinary Science,
Beihang University, Beijing 100191, China}

\pacs{03.67.-a, 03.65.Ud,  03.65.Ta}



\begin{abstract}
We propose a general framework for constructing universal steering criteria  that are applicable to arbitrary bipartite states and measurement settings of the steering party. The same framework is also useful for studying the joint measurement problem. Based on the data-processing inequality for an extended R\'enyi relative entropy, 
we then introduce a family of universal steering inequalities, which detect steering much more efficiently than those inequalities known before.  
As illustrations, we show unbounded violation of a steering inequality for assemblages constructed from mutually unbiased bases and establish an interesting connection between maximally steerable assemblages and complete sets of  mutually unbiased bases. We also provide a single steering inequality that can detect all bipartite pure states of full Schmidt rank. 
In the course of study, we generalize a number of results intimately connected to data-processing inequalities, which are of independent interest.
\end{abstract}

\date{\today}
\maketitle

\emph{Steering} is a nonclassical phenomenon that formalizes what Einstein called "spooky action at a distance" \cite{EinsPR35, Schr35D}. For a long time, it was studied under the name of Einstein-Podolsky-Rosen (EPR) paradox \cite{Wern89, Reid89, ReidDBC09}. Recently, it was realized that steering is a form of nonlocality that sits between entanglement and Bell nonlocality \cite{WiseJD07,JoneWD07,SaunJWP10, QuinVCA15} and that is intrinsically asymmetric \cite{MidgFO10,BowlVQB14}.
Interestingly, steering  can be characterized by a simple quantum information processing task, namely, entanglement verification with an untrusted party \cite{WiseJD07,JoneWD07}.
 In addition, steering has been found useful in a number of applications, such as subchannel discrimination \cite{PianW15} and one-sided device-independent quantum key distribution \cite{BranCWS12}.

Recently, detection and characterization of steering have attracted increasing attention \cite{Reid89,WiseJD07,JoneWD07, ReidDBC09,CavaJWR09, WalbSGT11, SchnBWC13, Puse13, PramKM14, KogiSCA15, SkrzNC14,PianW15, KogiLRA15}. Various  steering criteria and inequalities have been derived, such as linear steering inequalities \cite{CavaJWR09, Puse13}; inequalities based on multiplicative variances  \cite{Reid89, ReidDBC09,CavaJWR09}, entropic uncertainty relations \cite{WalbSGT11, SchnBWC13},  fine-grained uncertainty relations \cite{PramKM14}; and hierarchy of steering criteria based on moments \cite{KogiSCA15}. However, most of these results are tailored to deal with specific scenarios;  majority criteria are only applicable to given numbers of measurement settings and outcomes. In addition, many criteria (including most linear  criteria) rely heavily on numerical optimization and lack a clear physical meaning and simple interpretation.

In this paper, we propose a general framework for constructing universal steering criteria that are applicable to arbitrary measurement settings of the steering party. In particular, we introduce nonlinear steering inequalities based on the data-processing inequality for an extended R\'enyi relative entropy \cite{Reny61,Haya06book}, which  detect  steering  more systematically and efficiently than  criteria in the literature. The same framework is also useful for studying the joint measurement problem \cite{BuscHL07, HeinW10, QuinVB14, UolaMG14, UolaBGP15, Zhu15IC}. In addition, our inequalities have a clear information theoretic meaning and simple interpretation.
As illustrations of the general framework, we show unbounded violation of a steering inequality by virtue of \emph{mutually unbiased bases} (MUB) \cite{WootF89, DurtEBZ10} and establish an interesting connection between maximally steerable settings and complete sets of MUB. We also provide a single steering inequality that can detect all bipartite pure states of full Schmidt rank.

\begin{figure}[bth]
\centering\noindent
\includegraphics[width=6cm]{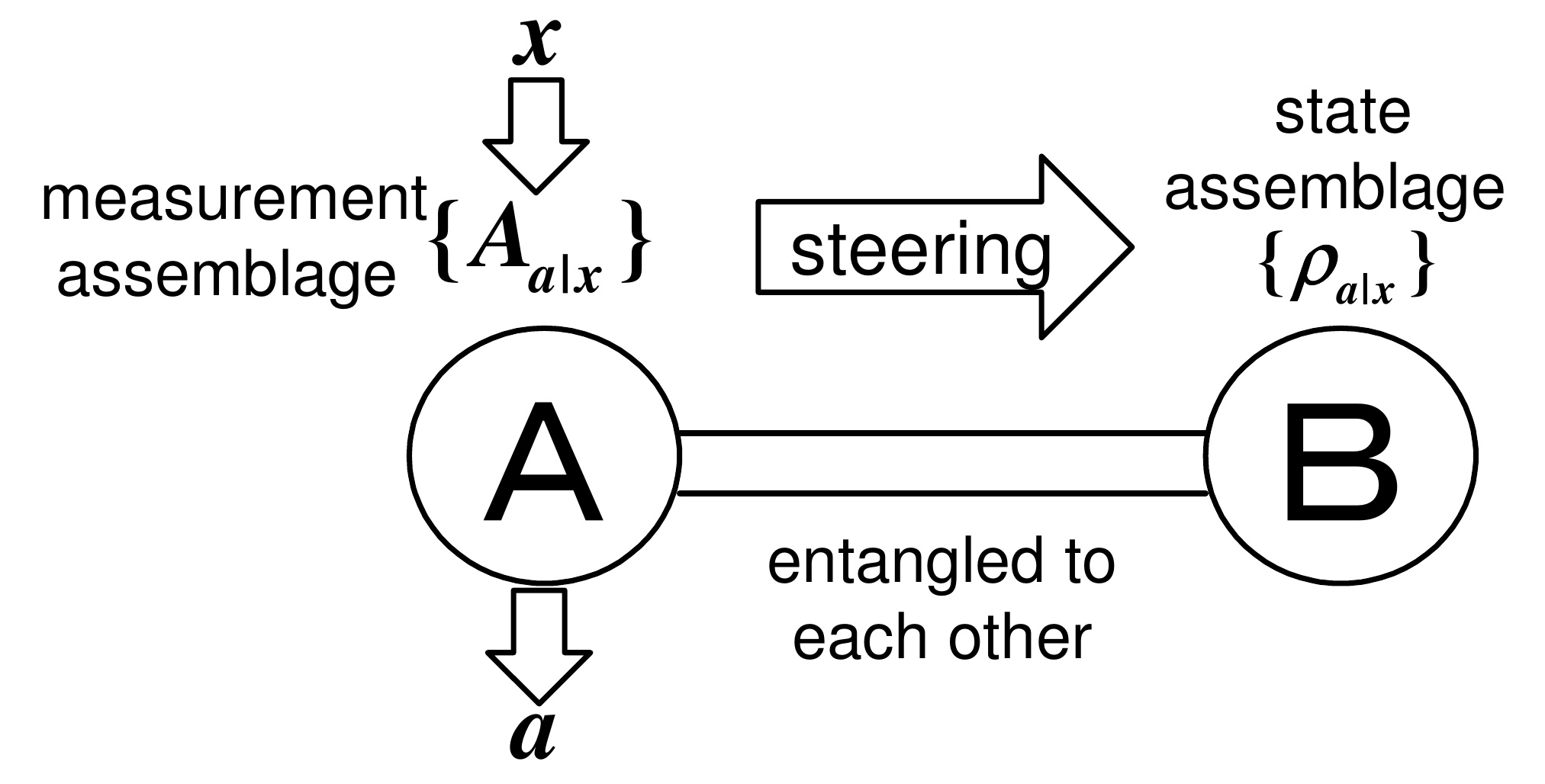}
\caption{\label{fig:steering}Steering scenario.
Alice can affect Bob's state via her choice of the measurement according to the relation  $\rho_{a|x}= \tr_A[(A_{a|x}\otimes 1) \rho]$. Entanglement is necessary but not sufficient for steering. }
\end{figure}

Suppose Alice and Bob share a bipartite state $\rho$ with reduced states $\rho_A$ and $\rho_B$.   Alice  can perform local measurements described by a collection of positive-operator-valued
measures (POVMs) $\{A_{a|x}\}$, which is known as a \emph{measurement assemblage}.  If Alice   obtains  outcome $a$ for  measurement $x$, then the unnormalized reduced state of Bob is $\rho_{a|x}=\tr_A [(A_{a|x}\otimes 1)\rho]$. In the following, we discuss  steering of Bob's system by Alice's measurements in terms of Bob's states~$\rho_{a|x}$.
The set of states $\rho_{a|x}$ for a given $x$ is called an \emph{ensemble} for $\rho_B$, and
the whose collection of ensembles is called a \emph{state
assemblage}~\cite{Puse13}; see \fref{fig:steering}.
To distinguish them, we express the ensemble by $\{\rho_{a|x}\}_a$ and the state assemblage by $\{\rho_{a|x}\}$. The assemblage $\{\rho_{a|x}\}$ is steerable if it does not admit a local hidden state model \cite{WiseJD07,JoneWD07} as $\rho_{a|x}=\sum_\lambda p(a|x,\lambda)\sigma_\lambda$ for all $a,x$,
where  $\{\sigma_\lambda\}$ is an ensemble for $\rho_B$ and   $p(a|x,\lambda)$ are a collection of stochastic maps with $p(a|x,\lambda)\geq 0$ and  $\sum_a p(a|x,\lambda)=1$. The state $\rho$ is steerable from Alice to Bob if there exists a measurement assemblage  
for Alice such that the resulting state assemblage for Bob is steerable. In this paper we shall focus on steerability of assemblages, no matter how they are constructed.

The steering problem is closely related to the joint measurement
problem of POVMs \cite{QuinVB14, UolaMG14, UolaBGP15, Zhu15IC, Puse15}.
Up to a scaling, a POVM may be seen as an ensemble for  the
completely mixed state. A measurement assemblage is \emph{compatible} or \emph{jointly measurable} if the corresponding state assemblage (for the completely mixed state) is unsteerable. 
In view of this connection, many results on steering can be turned into corresponding
results on POVMs, and vice versa. We shall make use of this connection without
further comments whenever convenient.

To set up the stage, we need to introduce  suitable order relations on ensembles and assemblages. 
Given two ensembles   $\{\rho_a\}$ and $\{\sigma_b\}$ for $\rho_B$, which may represent
two preparation procedures,  the ensemble  $\{\rho_a\}$  is a \emph{coarse graining} of $\{\sigma_b\}$, denoted by  $\{\rho_a\}\preceq \{\sigma_b\}$ or $\{\sigma_b\}\succeq \{\rho_a\}$,  if the former can be derived from the latter by data processing, that is, $\rho_a=\sum_b p(a|b) \sigma_b$,
where the stochastic map $p(a|b)$ characterizes the data-processing procedure. In that case, $\{\sigma_b\}$ is a \emph{refinement} of $\{\rho_a\}$. Intuitively, coarse graining usually leads to a less informative ensemble.
Two ensembles are \emph{equivalent} if they are coarse graining (refinement) of each other. The relation of coarse graining (refinement) forms a partial order on equivalent classes of ensembles for a given state.

The order relation on ensembles can be generalized to assemblages in a natural way. Given two assemblages $\{\rho_{a|x} \}$ and $\{\sigma_{b|y} \}$ for  $\rho_B$, the assemblage $\{\rho_{a|x} \}$ is a coarse graining of $\{\sigma_{b|y} \}$, denoted by $\{\rho_{a|x} \}\preceq \{\sigma_{b|y} \}$ or $\{\sigma_{b|y} \}\succeq\{\rho_{a|x} \}$, if each ensemble in $\{\rho_{a|x} \}$ is a coarse graining of an ensemble in  $\{\sigma_{b|y} \}$. In that case, $\{\sigma_{b|y} \}$ is called a refinement of $\{\rho_{a|x} \}$.
An assemblage is unsteerable if and only if it has a refinement that contains only one ensemble, that is,
all its ensembles possess a common refinement.
By definition, any coarse graining of an unsteerable assemblage is  unsteerable. Conversely, any refinement  of a steerable assemblage is steerable.

A function $f$ on ensembles is  \emph{order monotonic} (or order preserving) if $f(\{\rho_a\})\preceq f(\{\sigma_b\})$ whenever $\{\rho_a\}\preceq \{\sigma_b\}$. Order-monotonic functions on assemblages can be defined in a similar manner.
 Here the image of $f$ could be any space with a partial order, although we use the same notation for the partial order as that on ensembles.
 The image of all ensembles for a given state under an order-monotonic function $f$ is called the \emph{complementarity chamber}  and denoted by $\mathcal{C}_f$.
In those cases of interest to us, the chambers are usually   finite-dimensional compact convex sets,  and their shapes reflect the information tradeoff among different ensembles, hence the name.
For any unsteerable assemblage $\{\rho_{a|x}\}$ with a common refinement, say, $\{\sigma_\lambda\}$, we have $f(\{\rho_{a|x}\}_a)\preceq f(\{\sigma_\lambda\})\in  \mathcal{C}_f$ for all $x$. So $f(\{\rho_{a|x}\}_a)$
have a common upper bound in $\mathcal{C}_f$. Violation of this condition is a signature of steerability; see \fref{fig:USI} for an illustration.

\begin{figure}[tb]
\centering\noindent
\includegraphics[width=7cm]{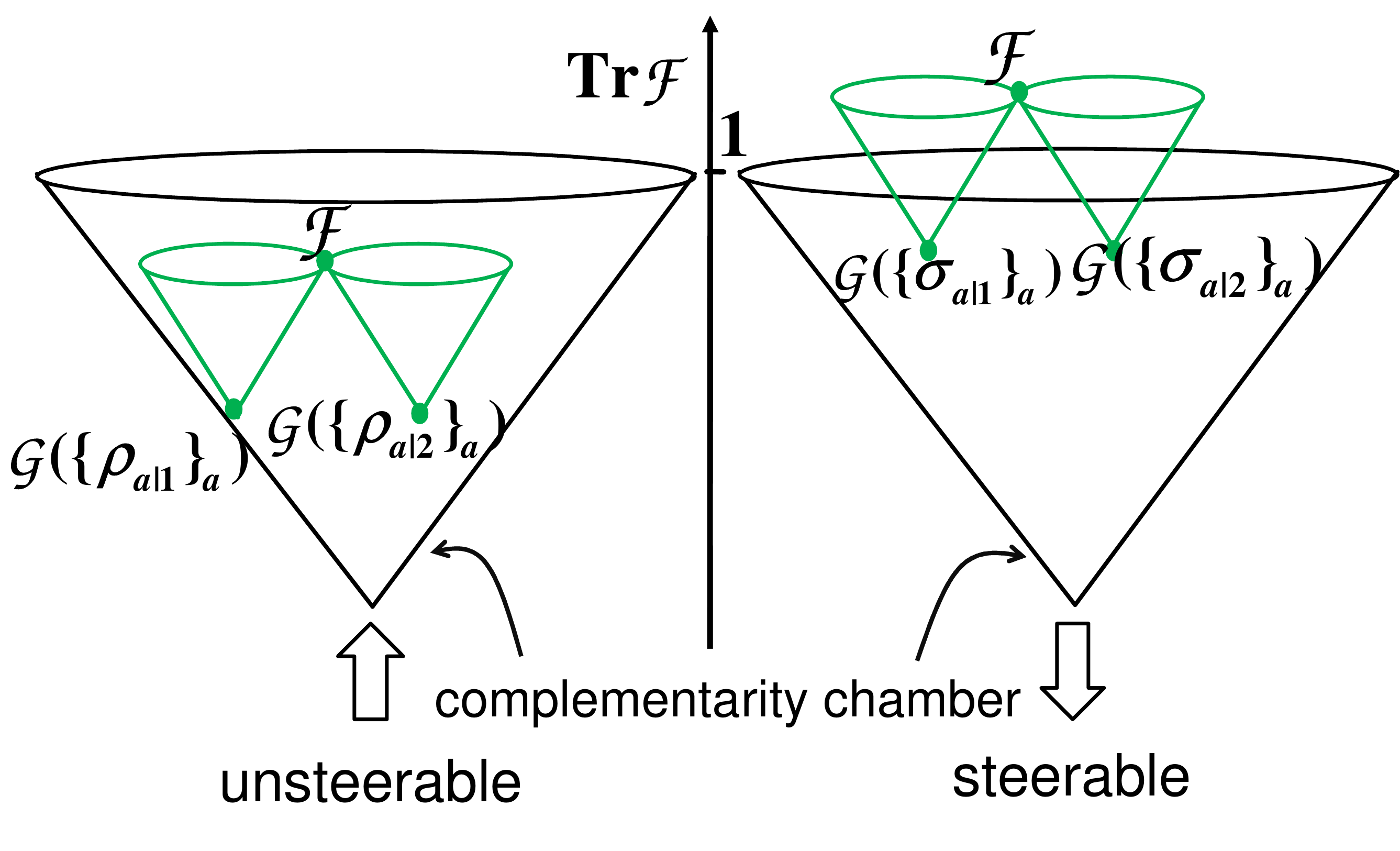}
\caption{\label{fig:USI}(color online) Simple idea behind universal steering criteria. Here the green cone at $\mathcal{G}(\{\rho_{a|x}\}_a)$ represents the set of superoperators $\mathcal{F}$ satisfying $\mathcal{F}\geq\mathcal{G}(\{\rho_{a|x}\}_a)$.
When  the  assemblage $\{\rho_{a|x}\}$ is unsteerable, $\mathcal{G}(\{\rho_{a|x}\}_a)$
have a common upper bound in the complementarity chamber (left plot). Violation
of this condition implies steerability (right plot).}
\end{figure}

To unleash the potential of the idea spelled out above, it is essential to construct  order-monotonic functions that are easy to characterize.
Inspired by the data-processing inequality for a R\'enyi relative entropy \cite{Reny61,Haya06book}, here 
we introduce two such functions from ensembles to superoperators.
Let $Q$ be a positive operator of full rank; define
\begin{equation}\label{eq:GQ}
\mathcal{G}_Q (\{\rho_a\}):=\sum_a \frac{\douter{\rho_a}{\rho_a} }{\tr(Q\rho_a)},\quad \barcal{G}_Q(\{\rho_a\}):=\sum_a \frac{\douter{\bar{\rho}_a}{\bar{\rho}_a} }{\tr(Q \rho_a)},
\end{equation}
where $\bar{\rho}_a=\rho_a-\tr(\rho_a)/d$ and $d$ is the dimension of the Hilbert space.  
Here, we consider the Hilbert space of operators on the physical space, i.e., the Hilbert-Schmidt space.
The kets in this space are denoted by the double-ket notation to distinguish them from ordinary kets. 
Superoperators, such as the outer product $\douter{A}{B}$, act on the operator space just as ordinary operators act on the usual Hilbert space; for example
$(\douter{A}{B})\dket{C}=\dket{A}\tr(B^\dag C)$ (cf.~\cite{Zhu12the,Zhu15IC}).

Now,  for a positive real vector $\vec{p}$ and a real vector $\vec{v}$ of the same length, 
we introduce extended R\'enyi relative entropy of order~2
as $D_2(\vec{v}\| \vec{p}):=\log \sum_k \frac{v_k^2}{p_k}$, which 
reduces to  the conventional R\'enyi relative entropy of order~2
when $\vec{p}$ and $\vec{v}$ represent probability distributions \cite{Reny61,Haya06book}.  
As shown in the supplementary material,
the  extended R\'enyi relative entropy obeys the data-processing inequality, from which we  deduce  the following theorem.
\begin{theorem}\label{thm:GQOM}
The functions $\mathcal{G}_Q(\cdot)$ and $\barcal{G}_Q(\cdot)$ are order monotonic  for any positive operator $Q$ of full rank.
\end{theorem}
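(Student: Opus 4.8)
The plan is to reduce \thref{thm:GQOM} to the data-processing inequality for the extended R\'enyi relative entropy. Since the target space of $\mathcal{G}_Q$ and $\barcal{G}_Q$ carries the positive-semidefinite order on superoperators, order monotonicity means exactly that $\{\rho_a\}\preceq\{\sigma_b\}$ forces $\mathcal{G}_Q(\{\sigma_b\})-\mathcal{G}_Q(\{\rho_a\})\geq 0$, and likewise for $\barcal{G}_Q$. First I would test this against an arbitrary ket $\dket{\psi}$ of the Hilbert--Schmidt space and set $v_a:=\dinner{\rho_a}{\psi}$, $w_b:=\dinner{\sigma_b}{\psi}$, $p_a:=\tr(Q\rho_a)$, $q_b:=\tr(Q\sigma_b)$; then $\dbra{\psi}\mathcal{G}_Q(\{\rho_a\})\dket{\psi}=\sum_a|v_a|^2/p_a$, so the whole statement collapses to the scalar inequality $\sum_a|v_a|^2/p_a\leq\sum_b|w_b|^2/q_b$. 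Because $\mathcal{G}_Q$ and $\barcal{G}_Q$ preserve Hermiticity, it is enough to take $\psi$ Hermitian, in which case $v_a,w_b$ are real and the left-hand side is $\exp D_2(\vec{v}\|\vec{p})$.

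Next I would note that coarse graining acts as a stochastic post-processing on these scalars: from $\rho_a=\sum_b p(a|b)\sigma_b$ with $p(a|b)\geq 0$ and $\sum_a p(a|b)=1$ one gets, by linearity of the trace, $v_a=\sum_b p(a|b)w_b$ and $p_a=\sum_b p(a|b)q_b$, i.e.\ $\vec{v}$ and $\vec{p}$ arise from $\vec{w}$ and $\vec{q}$ through one and the same stochastic map. Invoking the data-processing inequality for the extended R\'enyi relative entropy of order~2 (proved in the supplementary material) then yields $D_2(\vec{v}\|\vec{p})\leq D_2(\vec{w}\|\vec{q})$, which is precisely what is needed. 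A self-contained alternative is to bypass $D_2$ entirely and use the joint convexity of $(v,p)\mapsto|v|^2/p$ on $\mathbb{C}\times\mathbb{R}_{>0}$, which follows from its representation as the supremum of the affine maps $2\operatorname{Re}(\bar{t}v)-|t|^2 p$; this version also handles non-Hermitian $\psi$ directly, and then the passage through $\dket{\psi}$ is complete.

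For $\barcal{G}_Q$ I would run the same argument, observing first that the traceless-part map $\rho\mapsto\bar\rho=\rho-\tr(\rho)/d$ commutes with stochastic coarse graining: $\sum_a p(a|b)=1$ gives $\bar\rho_a=\sum_b p(a|b)\bar\sigma_b$, whereas the denominators $\tr(Q\rho_a)$ stay the same and still transform by that map. Hence replacing $v_a$ with $\dinner{\bar\rho_a}{\psi}$ leaves every step intact.

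The hard part is the data-processing inequality for the extended R\'enyi relative entropy itself: $\vec{v}$ need not be a probability vector, so this genuinely goes beyond the textbook statement, and it is the single ingredient I would defer to the supplementary material. Everything else is bookkeeping --- for instance, a vanishing denominator $p_a=0$ forces $\rho_a=0$ (as $Q$ is positive of full rank), hence $v_a=0$, so such a term drops out under the convention $0/0:=0$ --- but I would take care to state the reduction from general $\psi$ to Hermitian $\psi$ (or, equivalently, the complex joint-convexity estimate) precisely, so that positive semidefiniteness of the superoperator difference really follows from the scalar inequality.
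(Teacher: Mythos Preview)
Your proposal is correct and follows essentially the same route as the paper: test the superoperator inequality against an arbitrary Hermitian $\dket{C}$, observe that both the numerators $\tr(\rho_a C)$ and denominators $\tr(Q\rho_a)$ transform under the same stochastic map, and invoke the data-processing inequality for the extended R\'enyi relative entropy of order~2 (\lref{lem:DPIRenyi2} in the supplementary material). For $\barcal{G}_Q$ the paper phrases it as $\barcal{G}_Q=\bid\,\mathcal{G}_Q\,\bid$ rather than via commutation of the traceless-part map with coarse graining, but this is the same observation; your joint-convexity alternative and your remarks on non-Hermitian $\psi$ and vanishing denominators are extra care not present in the paper but entirely sound.
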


When $Q$ is the identity, \eref{eq:GQ} reduces to
\begin{equation}\label{eq:G}
\mathcal{G}(\{\rho_a\}):=\sum_a \frac{\douter{\rho_a}{\rho_a} }{\tr(\rho_a)},\quad \barcal{G}(\{\rho_a\}):=\sum_a \frac{\douter{\bar{\rho}_a}{\bar{\rho}_a} }{\tr(\rho_a)}.
\end{equation}
We have
\begin{equation}\label{eq:Gbound}
\begin{aligned}
\Tr(\mathcal{G}(\{\rho_a\}))&=\sum_a\frac{\tr(\rho_a^2)}{\tr(\rho_a)}\leq \sum_a\tr(\rho_a)=1,\\
\Tr(\barcal{G}(\{\rho_a\}))&=\Tr(\mathcal{G}(\{\rho_a\}))-\frac{1}{d}\leq 1-\frac{1}{d},
\end{aligned}
\end{equation}
where "$\Tr$" denotes the trace of superoperators.
Here the upper bounds are saturated if and only if the ensemble is rank 1, that is, all the $\rho_a$ have rank 1. 
Define
\begin{equation}\label{eq:tau}
\begin{aligned}
\tau(\{\rho_{a|x}\})&=\min\{\Tr(\mathcal{F})| \mathcal{F}\geq \mathcal{G}(\{\rho_{a|x}\}_a)\; \forall x\},\\
\bar{\tau}(\{\rho_{a|x}\})&=\min\{\Tr(\mathcal{F})| \mathcal{F}\geq \barcal{G}(\{\rho_{a|x}\}_a)\; \forall x\}.
\end{aligned}
\end{equation}
\begin{theorem}\label{thm:USI}
The functions $\tau(\cdot)$ and $\bar{\tau}(\cdot)$ are order-monotonic on assemblages. Any unsteerable assemblage $\{\rho_{a|x}\}$ satisfies  $\tau(\{\rho_{a|x}\})\leq 1$ and $\bar{\tau}(\{\rho_{a|x}\})\leq 1-1/d$. Any compatible measurement assemblage $\{M_{a|x}\}$ satisfies $\tau(\{M_{a|x}\})\leq d$
 and $\bar{\tau}(\{M_{a|x}\})\leq d-1$.
\end{theorem}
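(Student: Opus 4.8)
The plan is to derive all three assertions from \thref{thm:GQOM} (applied with $Q$ the identity, which is positive and of full rank, so that $\mathcal{G}=\mathcal{G}_{\mathbf 1}$ and $\barcal{G}=\barcal{G}_{\mathbf 1}$ are order monotonic) together with the elementary trace bounds in \eqref{eq:Gbound}; the rest is bookkeeping about the feasible sets of the optimizations in \eqref{eq:tau}. For order monotonicity, suppose $\{\rho_{a|x}\}\preceq\{\sigma_{b|y}\}$. By definition of the order on assemblages, for every $x$ there is a $y=y(x)$ with $\{\rho_{a|x}\}_a\preceq\{\sigma_{b|y(x)}\}_b$, whence $\mathcal{G}(\{\rho_{a|x}\}_a)\le\mathcal{G}(\{\sigma_{b|y(x)}\}_b)$ in the operator order on superoperators by \thref{thm:GQOM}. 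Therefore any $\mathcal{F}$ feasible for $\{\sigma_{b|y}\}$, i.e.\ $\mathcal{F}\ge\mathcal{G}(\{\sigma_{b|y}\}_b)$ for all $y$, also satisfies $\mathcal{F}\ge\mathcal{G}(\{\sigma_{b|y(x)}\}_b)\ge\mathcal{G}(\{\rho_{a|x}\}_a)$ for all $x$ and so is feasible for $\{\rho_{a|x}\}$. The feasible set for $\{\sigma_{b|y}\}$ is thus contained in that for $\{\rho_{a|x}\}$, so minimizing $\Tr\mathcal{F}$ over the latter gives no larger a value: $\tau(\{\rho_{a|x}\})\le\tau(\{\sigma_{b|y}\})$. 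Replacing $\mathcal{G}$ by $\barcal{G}$ throughout proves the same for $\bar\tau$. (The minima are attained because each $\mathcal{G}(\{\rho_{a|x}\}_a)$ and $\barcal{G}(\{\rho_{a|x}\}_a)$ is positive semidefinite, forcing the feasible $\mathcal{F}$ to be positive semidefinite, on which the sublevel sets of $\Tr$ are compact.)

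Next I would bound unsteerable assemblages. If $\{\rho_{a|x}\}$ is unsteerable it possesses a common refinement consisting of a single ensemble $\{\sigma_\lambda\}$, so $\{\rho_{a|x}\}\preceq\{\sigma_\lambda\}$ as assemblages and, by the monotonicity just proved, $\tau(\{\rho_{a|x}\})\le\tau(\{\sigma_\lambda\})$. For a one-ensemble assemblage the optimization in \eqref{eq:tau} has the single constraint $\mathcal{F}\ge\mathcal{G}(\{\sigma_\lambda\})$; since $\Tr$ is monotone under the operator order, the minimizer is $\mathcal{F}=\mathcal{G}(\{\sigma_\lambda\})$ itself, so $\tau(\{\sigma_\lambda\})=\Tr(\mathcal{G}(\{\sigma_\lambda\}))\le1$ by \eqref{eq:Gbound}. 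Hence $\tau(\{\rho_{a|x}\})\le1$, and likewise $\bar\tau(\{\rho_{a|x}\})\le\bar\tau(\{\sigma_\lambda\})=\Tr(\barcal{G}(\{\sigma_\lambda\}))\le1-1/d$.

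For compatible measurement assemblages I would exploit the homogeneity of $\mathcal{G}$ and $\barcal{G}$ under the POVM–ensemble dictionary. Identifying the POVM $\{M_{a|x}\}_a$ with the ensemble $\{M_{a|x}/d\}_a$ for the completely mixed state, one reads off from \eqref{eq:G} that $\mathcal{G}(\{M_{a|x}/d\}_a)=\tfrac1d\,\mathcal{G}(\{M_{a|x}\}_a)$ — the factor $d^{-2}$ from $\douter{\rho_a}{\rho_a}$ against the factor $d^{-1}$ from $\tr(\rho_a)$ — and, since $\rho_a-\tr(\rho_a)/d$ equals $d^{-1}(M_{a|x}-\tr(M_{a|x})/d)$, also $\barcal{G}(\{M_{a|x}/d\}_a)=\tfrac1d\,\barcal{G}(\{M_{a|x}\}_a)$. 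Rescaling the optimization variable $\mathcal{F}\mapsto d^{-1}\mathcal{F}$ in \eqref{eq:tau} then gives $\tau(\{M_{a|x}/d\})=\tfrac1d\,\tau(\{M_{a|x}\})$ and $\bar\tau(\{M_{a|x}/d\})=\tfrac1d\,\bar\tau(\{M_{a|x}\})$. Compatibility of $\{M_{a|x}\}$ means precisely that the state assemblage $\{M_{a|x}/d\}$ is unsteerable, so the previous paragraph yields $\tau(\{M_{a|x}/d\})\le1$ and $\bar\tau(\{M_{a|x}/d\})\le1-1/d$, i.e.\ $\tau(\{M_{a|x}\})\le d$ and $\bar\tau(\{M_{a|x}\})\le d-1$.

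The genuine content has all been absorbed into \thref{thm:GQOM} (the data-processing inequality for the extended R\'enyi relative entropy of order $2$) and into the one-line bound \eqref{eq:Gbound}; the steps above are routine. The point requiring the most care is the direction of the inequalities: order monotonicity of $\mathcal{G}$ makes the feasible set in \eqref{eq:tau} \emph{shrink} as an assemblage is refined, which is exactly why $\tau$ and $\bar\tau$ \emph{increase}, and one must check that the quantifier structure ("for each $x$ there is a $y$") matches the definition of $\preceq$ on assemblages used when selecting the witness ensemble $\{\sigma_{b|y(x)}\}_b$.
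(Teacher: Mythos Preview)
Your proof is correct and follows essentially the same approach as the paper: order monotonicity via \thref{thm:GQOM} and containment of feasible sets, then the unsteerable bound via a common refinement $\{\sigma_\lambda\}$ together with \eqref{eq:Gbound}. The only cosmetic difference is in the measurement-assemblage case: the paper's ``same reasoning'' applies the common-refinement argument directly to the POVM (using $\sum_\lambda\tr(N_\lambda)=d$ in place of $\sum_\lambda\tr(\sigma_\lambda)=1$), whereas you reduce to the state case by the rescaling $M_{a|x}\mapsto M_{a|x}/d$ and the homogeneity $\tau(\{M_{a|x}/d\})=\tfrac1d\,\tau(\{M_{a|x}\})$; both routes are equivalent and equally short.
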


\begin{proof}
Suppose $\{\rho_{a|x}\}\preceq \{\sigma_{b|y}\}$. Then for any ensemble $x$ in $\{\rho_{a|x}\}$ there exists an ensemble $y$ in
$\{\sigma_{b|y}\}$ such that $\mathcal{G}(\{\rho_{a|x}\}_a)\leq \mathcal{G}(\{\sigma_{b|y}\}_b)$ according to \thref{thm:GQOM}.
Therefore, $\mathcal{F} \geq \mathcal{G}(\{\rho_{a|x}\}_a)$ for all $x$ whenever $\mathcal{F} \geq\mathcal{G}(\{\sigma_{b|y}\}_b)$ for all $y$. It follows  that $\tau(\{\rho_{a|x}\})\leq \tau(\{\sigma_{b|y}\})$  and   $\tau(\cdot)$ is order-monotonic. By the same reasoning, so is $\bar{\tau}(\cdot)$.

The ensembles in $\{\rho_{a|x}\}$ possess a common refinement, say $\{\sigma_\lambda\}$, so that  $\mathcal{G}(\{\sigma_\lambda\})\geq \mathcal{G}(\{\rho_{a|x}\}_a)$ for all $x$. On the other hand, $\Tr(\mathcal{G}(\{\sigma_\lambda\}))\leq 1$ according to \eref{eq:Gbound}. It follows   that $\tau(\{\rho_{a|x}\})\leq 1$. The other three inequalities in \thref{thm:USI} follow from the same reasoning.
\end{proof}
\begin{remark}
According to \pref{pro:taudif} in
the supplementary material, $\tau(\{\rho_{a|x}\})=\bar{\tau}(\{\rho_{a|x}\})+1/d$, so  the  inequalities   $\tau(\{\rho_{a|x}\})\leq
1$ and $\bar{\tau}(\{\rho_{a|x}\})\leq 1-1/d$ are equivalent;  so are $\tau(\{M_{a|x}\})\leq d$ and $\bar{\tau}(\{M_{a|x}\})\leq d-1$. In practice, one  may be easier to analyze than another.

\end{remark}
\Fref{fig:USI} depicts the simple idea behind steering inequalities
in \thref{thm:USI}.
These inequalities means that unsteerable (compatible)\ assemblages cannot
be too informative (cf. \rcite{Zhu15IC}).
Compared with steering inequalities in the literatures \cite{CavaJWR09, Puse13},
what is remarkable is that they are applicable to arbitrary assemblages and
the bounds can be derived without numerical optimization.
The values of $\tau(\{\rho_{a|x}\})$ and $\bar{\tau}(\{\rho_{a|x}\})$ can
be computed efficiently with  semidefinite programming (SDP), whose size
increases only linearly with the number of
ensembles. Although the steerability of an assemblage can be determined by
 SDP \cite{Puse13, SkrzNC14}, the size of  such SDP increases exponentially
with  ensembles.  Our approach is attractive from both conceptual
and practical perspectives.

To illustrate the application of  our steering inequalities, we need to introduce several  concepts. Two ensembles $\{\rho_a\}$ and $\{\sigma_b\}$ are \emph{mutually orthogonal} if $\tr(\bar{\rho}_a\bar{\sigma}_b)=0 $ for all $a,b$ or, equivalently, if $\barcal{G}(\{\rho_{a}\})$ and  $\barcal{G}(\{\sigma_{b}\})$ have mutually orthogonal support.
The same definition applies to POVMs. For POVMs corresponding to rank-1 projective measurements, orthogonality is equivalent to mutually unbiasedness. Recall that two bases $\{|\psi_j\rangle \}$ and $\{|\varphi_k\rangle\}$ in dimension $d$ are mutually unbiased if $|\langle\psi_j|\varphi_k\rangle|^2=1/d$ for all $j,k$ \cite{WootF89, DurtEBZ10}. The following two propositions are proved in the supplementary material.
\begin{proposition}\label{pro:MSA}
Any measurement assemblage $\{M_{a|x}\}$ satisfies $\tau(\{M_{a|x}\})\leq d^2$ and $\bar{\tau}(\{M_{a|x}\})\leq d^2-1$. Any state  assemblage $\{\rho_{a|x}\}$   satisfies $\tau(\{\rho_{a|x}\})\leq d$ and $\bar{\tau}(\{\rho_{a|x}\})\leq d-1/d$.
\end{proposition}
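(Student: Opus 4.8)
The plan is to exhibit, for each of the four claimed inequalities, one explicit feasible superoperator $\mathcal{F}$ for the semidefinite program in~\eqref{eq:tau} whose trace equals the stated value; essentially all of the work sits in a single assemblage-independent upper bound on $\mathcal{G}$. First I would prove that, for any ensemble $\{\rho_a\}$ with $\sum_a\rho_a=\sigma$, one has the superoperator inequality $\mathcal{G}(\{\rho_a\})\le\mathcal{A}_\sigma$, where $\mathcal{A}_\sigma(Y):=\tfrac12(\sigma Y+Y\sigma)$. This is just Cauchy--Schwarz in the Hilbert--Schmidt space applied to $\rho_a^{1/2}$ and $\rho_a^{1/2}\phi$, which gives $\tr(\rho_a\phi)^2\le\tr(\rho_a)\,\tr(\rho_a\phi^2)$ for Hermitian $\phi$; summing over $a$, $\dbra{\phi}\mathcal{G}(\{\rho_a\})\dket{\phi}=\sum_a\tr(\rho_a\phi)^2/\tr(\rho_a)\le\sum_a\tr(\rho_a\phi^2)=\tr(\sigma\phi^2)=\dbra{\phi}\mathcal{A}_\sigma\dket{\phi}$. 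I would also record that $\mathcal{A}_\sigma=\tfrac12(L_\sigma+R_\sigma)$, with $L_\sigma$ and $R_\sigma$ denoting left and right multiplication by $\sigma$, so that $\mathcal{A}_\sigma$ is Hermiticity preserving, positive whenever $\sigma\ge0$, and $\Tr(\mathcal{A}_\sigma)=d\,\tr\sigma$.

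With this in hand the bounds on $\tau$ are immediate. Every ensemble $\{\rho_{a|x}\}_a$ of a state assemblage has the same marginal $\rho_B=\sum_a\rho_{a|x}$ with $\tr\rho_B=1$, so $\mathcal{F}=\mathcal{A}_{\rho_B}$ obeys $\mathcal{F}\ge\mathcal{G}(\{\rho_{a|x}\}_a)$ for every $x$ and is therefore feasible, giving $\tau(\{\rho_{a|x}\})\le\Tr(\mathcal{A}_{\rho_B})=d$. For a measurement assemblage, $\sum_aM_{a|x}=I$, so $\mathcal{F}=\mathcal{A}_I$---which is simply the identity superoperator on the $d^2$-dimensional Hilbert--Schmidt space---is feasible, yielding $\tau(\{M_{a|x}\})\le\Tr(\mathcal{A}_I)=d^2$.

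For the barred quantities I would use that $\barcal{G}$ and $\mathcal{G}$ differ by a correction depending only on the common marginal. Expanding $\bar\rho_a=\rho_a-(\tr\rho_a/d)I$ and summing gives $\barcal{G}(\{\rho_a\})=\mathcal{G}(\{\rho_a\})-\Delta_\sigma$ with $\Delta_\sigma=\tfrac1d\bigl(\douter{\sigma}{I}+\douter{I}{\sigma}\bigr)-\tfrac{\tr\sigma}{d^2}\douter{I}{I}$ and $\Tr(\Delta_\sigma)=\tr(\sigma)/d$. Hence $\mathcal{A}_\sigma-\Delta_\sigma$ dominates $\barcal{G}(\{\rho_{a|x}\}_a)$ for every $x$---it is automatically positive, being the sum of $\barcal{G}(\{\rho_{a|x}\}_a)\ge0$ and $\mathcal{A}_\sigma-\mathcal{G}(\{\rho_{a|x}\}_a)\ge0$---so $\bar\tau\le\Tr(\mathcal{A}_\sigma-\Delta_\sigma)=d\,\tr\sigma-\tr(\sigma)/d$. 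Specializing $\sigma=\rho_B$ ($\tr\sigma=1$) gives $\bar\tau(\{\rho_{a|x}\})\le d-1/d$, and $\sigma=I$ ($\tr\sigma=d$) gives $\bar\tau(\{M_{a|x}\})\le d^2-1$. In the state case one may alternatively just combine $\tau(\{\rho_{a|x}\})\le d$ with the identity $\tau(\{\rho_{a|x}\})=\bar\tau(\{\rho_{a|x}\})+1/d$ noted in the Remark (\pref{pro:taudif}).

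I expect the only real obstacle to be the first step: the Cauchy--Schwarz estimate has to be carried out on the correct real space of Hermitian operators, on which $\mathcal{G}$, $\barcal{G}$, and the candidate $\mathcal{F}$'s all act as symmetric operators, so that $\mathcal{G}\le\mathcal{A}_\sigma$ is a genuine inequality between admissible feasible points of~\eqref{eq:tau}. Once that inequality is in place, everything else reduces to computing traces of rank-one superoperators and bookkeeping.
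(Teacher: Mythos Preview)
Your proposal is correct and follows essentially the same approach as the paper. Your superoperator $\mathcal{A}_\sigma$ is precisely the paper's $\mathcal{R}_\sigma$ from \eqref{eq:LRmultiplication}, and the Cauchy--Schwarz step you outline is exactly the content of the paper's auxiliary \lref{lem:SLDbound}; for the barred bounds the paper conjugates by $\bid$ to get $\barcal{G}\le\bid\mathcal{R}_\sigma\bid$, but since $\bid\mathcal{R}_\sigma\bid=\mathcal{R}_\sigma-\Delta_\sigma$ this is algebraically the same feasible point you construct by subtracting $\Delta_\sigma$.
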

\begin{proposition}\label{pro:MSAm}
Any measurement  assemblage $\{M_{a|x}\}$ with $m$ POVMs satisfies $\bar{\tau}(\{M_{a|x}\})\leq m(d-1)$. Any state assemblage $\{\rho_{a|x}\}$  with $m$ ensembles  satisfies $\bar{\tau}(\{\rho_{a|x}\})\leq m(1-1/d)$. The upper bound is saturated if and only if the POVMs (ensembles) are rank 1 and mutually orthogonal.
\end{proposition}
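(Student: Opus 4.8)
The plan is to obtain the upper bounds by writing down one explicit feasible point in \eqref{eq:tau}, and then to settle the equality case by passing to the dual semidefinite program. First, since each $\barcal G(\{\rho_{a|x}\}_a)$ is a positive superoperator, the sum $\mathcal F:=\sum_x\barcal G(\{\rho_{a|x}\}_a)$ dominates every term and hence is admissible in \eqref{eq:tau}. Taking traces and invoking \eqref{eq:Gbound}---for a POVM the identical computation applies with $\sum_a\tr(M_{a|x})=d$ in place of $\sum_a\tr(\rho_a)=1$, which turns the correction $1/d$ into $1$---gives $\Tr\barcal G(\{\rho_{a|x}\}_a)\le 1-1/d$ and $\Tr\barcal G(\{M_{a|x}\}_a)\le d-1$, with equality in each case exactly when the ensemble (POVM) is rank~1, i.e.\ $\tr(\rho_a^2)=\tr(\rho_a)^2$ for every $a$. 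Summing over the $m$ ensembles gives $\bar\tau(\{\rho_{a|x}\})\le m(1-1/d)$ and $\bar\tau(\{M_{a|x}\})\le m(d-1)$.

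For the saturation statement, abbreviate $G_x:=\barcal G(\{\rho_{a|x}\}_a)$, resp.\ $\barcal G(\{M_{a|x}\}_a)$. The program $\bar\tau=\min\{\Tr\mathcal F:\mathcal F\succeq G_x\ \forall x\}$ is strictly feasible (take $\mathcal F=\sum_xG_x+\epsilon\mathbf 1$, with $\mathbf 1$ the identity superoperator), so strong duality yields $\bar\tau=\max\{\sum_x\Tr(Y_xG_x):Y_x\succeq0,\ \sum_xY_x=\mathbf 1\}$. The constraint forces $0\preceq Y_x\preceq\mathbf 1$, hence $\Tr(Y_xG_x)\le\Tr G_x$, which reproduces the bound. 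If the bound is attained, then for an optimal $(Y_x)$ all these inequalities are equalities: each $\Tr G_x$ is maximal, so every ensemble (POVM) is rank~1; and $\Tr\big((\mathbf 1-Y_x)G_x\big)=0$, which for positive operators forces $(\mathbf 1-Y_x)G_x=0$, i.e.\ $Y_x$ acts as the identity on the support of $G_x$ and therefore $Y_x\succeq\Pi_x$, the projector onto that support. Then $\mathbf 1=\sum_xY_x\succeq\sum_x\Pi_x$, and a sum of projectors bounded above by $\mathbf 1$ must have pairwise orthogonal ranges; equivalently, the $G_x$ have pairwise orthogonal supports, which is exactly mutual orthogonality of the ensembles (POVMs). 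For the converse, if the ensembles (POVMs) are rank~1 and mutually orthogonal, then each $\Tr G_x$ is maximal and the $\Pi_x$ are pairwise orthogonal, so $\sum_x\Pi_x\preceq\mathbf 1$; the dual point $Y_1=\Pi_1+(\mathbf 1-\sum_x\Pi_x)$, $Y_x=\Pi_x$ for $x\ge2$, is then feasible and, using $\Pi_x\Pi_y=0$, has value $\sum_x\Tr(\Pi_xG_x)=\sum_x\Tr G_x=m(1-1/d)$, resp.\ $m(d-1)$, so $\bar\tau$ attains the bound.

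The feasibility of $\sum_xG_x$, the equality case of \eqref{eq:Gbound}, and the small linear-algebra facts used (a positive operator with vanishing trace against another positive operator is zero; a sum of projectors bounded by $\mathbf 1$ is orthogonal) are routine. The substantive point---and the reason a bare feasibility argument cannot close the "only if" direction---is that as soon as the supports of the $G_x$ overlap there exist feasible $\mathcal F$ with $\Tr\mathcal F<\sum_x\Tr G_x$, so non-orthogonality cannot be excluded by testing a single candidate; the dual formulation is precisely what removes this difficulty. Accordingly, the one step that needs genuine care is the justification of strong duality (Slater's condition, as noted above)---or, if one prefers to avoid duality, a direct proof of the equivalent L\"owner-order lemma that positive operators $A,B$ with $\tr(AB)>0$ admit a common upper bound of trace strictly less than $\Tr A+\Tr B$.
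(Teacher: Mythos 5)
Your proposal is correct, and for the saturation ("only if") part it takes a genuinely different route from the paper. The paper gets the upper bound the same way you do (the feasible point $\sum_x\barcal{G}(\{\rho_{a|x}\}_a)$ together with \eref{eq:Gbound}), but it handles saturation through a purely primal operator lemma: for positive operators $A_j$, the minimal trace of a common upper bound equals $\sum_j\tr(A_j)$ iff the $A_j$ have mutually orthogonal supports, proved by pinching an optimal $\mathcal{F}$ with the support projectors (for sufficiency) and, for necessity, by asserting that two non-orthogonal positive operators admit a nonzero common \emph{lower} bound $B\leq A_1,\,B\leq A_2$, which is subtracted from $\sum_jA_j$ to beat the trace bound. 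Your dual argument (Slater, $\sum_xY_x=\mathbf{I}$, complementary slackness forcing $Y_x\succeq\Pi_x$ and hence $\sum_x\Pi_x\preceq\mathbf{I}$) reaches the same conclusion without that lower-bound claim, and this is a real advantage: a nonzero common Löwner lower bound exists only when the supports of $A_1,A_2$ intersect nontrivially (e.g.\ two non-parallel rank-one operators have none), so the paper's necessity step is loose exactly where your duality argument—or, equivalently, your reformulated lemma that $\tr(AB)>0$ implies a common \emph{upper} bound of trace strictly below $\Tr A+\Tr B$, which is provable via $\min\Tr F=\Tr B+\Tr\bigl[(A-B)_+\bigr]$—is airtight. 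What the paper's primal route buys in exchange is the extra fact that in the orthogonal case the optimizer is unique and equals $\sum_xG_x$; your approach does not yield that, but it is not needed for \pref{pro:MSAm}. Your identification of the rank-one condition from saturating \eref{eq:Gbound}, and your explicit dual feasible point for the converse, match the paper's intent and are correct.
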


In view of \pref{pro:MSA}, measurement assemblages saturating the upper bound  $\bar{\tau}(\{M_{a|x}\})\leq d^2-1$ (or $\tau(\{M_{a|x}\})\leq
d^2$) are called maximally incompatible;  state  assemblages  saturating  $\bar{\tau}(\{\rho_{a|x}\})\leq d-1/d$ (or $\tau(\{\rho_{a|x}\})\leq d$) are called maximally steerable.

When  a measurement  assemblage $\{M_{a|x}\}$ is composed of $m$ projective measurements,  the upper bound  $\bar{\tau}(\{M_{a|x}\})\leq m(d-1)$ is saturated if and only if the bases are mutually unbiased. So measurement assemblages composed of MUB are maximally incompatible for given $m$; accordingly, state assemblages constructed from MUB are maximally steerable.
 The inequality $\bar{\tau}(\{M_{a|x}\})\leq d^2-1$ in \pref{pro:MSA} means that each set of MUB can contain at most $d+1$ bases, in agreement with the well-known bound \cite{WootF89, DurtEBZ10}.
When $d$ is a prime power, a complete set of MUB  can be constructed \cite{WootF89, DurtEBZ10}, so the compatibility inequality $\bar{\tau}(\{M_{a|x}\})\leq
d-1$ and the steering inequality $\bar{\tau}(\{\rho_{a|x}\})\leq 1-1/d$ can be violated by a factor of $d+1$, which is unbounded as $d$ grows. In contrast with the unbounded violation of a linear steering inequality shown in   \rcite{MarcRYH15}, our result follows from a universal recipe, and the degree of violation can be  determined precisely.
 An intriguing problem left open is how many bases are needed to construct a maximally incompatible (steerable) assemblage when complete sets of MUB cannot be found, say in dimension~6.

Next, we generalize \thref{thm:USI} by virtue of the order-monotonic functions $\mathcal{G}_Q$ and $\barcal{G}_Q$.  Define   superoperator $\mathcal{R}_Q$~\cite{BrauC94,PetzS96,Zhu12the,Zhu15IC} and $\barcal{R}_Q$ by the
equation
\begin{equation}\label{eq:LRmultiplication}
\begin{aligned}
\mathcal{R}_Q\dket{S}=\frac{1}{2}\dket{SQ+ QS},\quad
\barcal{R}_Q=\mathcal{R}_Q-\frac{\douter{Q}{Q}}{\tr(Q)}.
\end{aligned}
\end{equation}
Note that both  $\mathcal{R}_Q$ and $\barcal{R}_Q$ are positive superoperators, that is, $\dbra{S}\mathcal{R}_Q\dket{S}\geq 0$ and $\dbra{S}\barcal{R}_Q\dket{S}\geq 0$ for any operator $S$; so $\mathcal{R}_Q^{1/2}$ and $\barcal{R}_Q^{1/2}$ are well defined. Here comes the analogy of \eref{eq:Gbound},
\begin{equation}\label{eq:GQbound}
\begin{aligned}
\Tr(\mathcal{R}_Q \mathcal{G}_Q(\{\rho_a\}))&=\sum_a\frac{\tr (Q\rho_a^2)}{\tr(Q\rho_a)}\leq \sum_a\tr(\rho_a)=1,\\
\Tr(\barcal{R}_Q \barcal{G}_Q(\{\rho_a\}))&=\Tr(\mathcal{R}_Q \mathcal{G}_Q(\{\rho_a\}))-\frac{\tr(Q\rho_B)}{\tr (Q)} \\
&\leq 1-\frac{\tr(Q\rho_B)}{\tr(Q)} .
\end{aligned}
\end{equation}
Again,  the upper bounds are saturated if and only if the ensemble is rank 1. The operator $Q$ serves as a probe. If $Q=1$, then $\mathcal{R}_Q=\mathbf{I}$ and $\barcal{R}_Q=\bid$, where $\mathbf{I}$ is the identity superoperator and $\bid$ is the projector onto the space of traceless operators, so \eref{eq:GQbound} reduces to \eref{eq:Gbound}. Define
\begin{equation}\label{eq:tauQ}
\begin{aligned}
\tau_Q(\{\rho_{a|x}\})&=\min\{\Tr(\mathcal{F})| \mathcal{F}\geq \tilde{\mathcal{G}}_Q(\{\rho_{a|x}\}_a)\; \forall x\},\\
\bar{\tau}_Q(\{\rho_{a|x}\})&=\min\{\Tr(\mathcal{F})| \mathcal{F}\geq \tilde{\barcal{G}}_Q(\{\rho_{a|x}\}_a)\; \forall x\},
\end{aligned}
\end{equation}
where $\tilde{\mathcal{G}}_Q=\mathcal{R}_Q^{1/2}\mathcal{G}_Q\mathcal{R}_Q^{1/2}$ and $\tilde{\barcal{G}}_Q=\barcal{R}_Q^{1/2}\barcal{G}_Q\barcal{R}_Q^{1/2}$. By the same reasoning as in the proof of \thref{thm:USI}, we have
\begin{theorem}\label{thm:USI2}
The functions $\tau_Q(\cdot)$ and $\bar{\tau}_Q(\cdot)$ are order-monotonic on assemblages for any invertible positive operator $Q$. Any unsteerable state assemblage $\{\rho_{a|x}\}$ satisfies
$\tau_Q(\{\rho_{a|x}\})\leq 1$ and $\bar{\tau}_Q(\{\rho_{a|x}\})\leq 1-\tr(Q\rho_B)/\tr(Q)$.
Any compatible measurement assemblage $\{\rho_{a|x}\}$ satisfies
$\tau_Q(\{M_{a|x}\})\leq d$ and $\bar{\tau}_Q(\{M_{a|x}\})\leq d-1$.
\end{theorem}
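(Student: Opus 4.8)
The plan is to transcribe the proof of \thref{thm:USI} almost verbatim, with $\mathcal{G}$ and $\barcal{G}$ replaced by $\tilde{\mathcal{G}}_Q$ and $\tilde{\barcal{G}}_Q$; the only genuinely new input is that conjugation by the fixed positive superoperators $\mathcal{R}_Q^{1/2}$ and $\barcal{R}_Q^{1/2}$ interacts well with both the Loewner order and the superoperator trace. First I would record the elementary fact that for any fixed positive superoperator $\mathcal{R}$ and superoperators $\mathcal{A}\ge\mathcal{B}$ one has $\mathcal{R}^{1/2}\mathcal{A}\mathcal{R}^{1/2}\ge\mathcal{R}^{1/2}\mathcal{B}\mathcal{R}^{1/2}$, since $\dbra{S}\mathcal{R}^{1/2}(\mathcal{A}-\mathcal{B})\mathcal{R}^{1/2}\dket{S}=\dbra{T}(\mathcal{A}-\mathcal{B})\dket{T}\ge 0$ with $\dket{T}:=\mathcal{R}^{1/2}\dket{S}$. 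Taking $\mathcal{R}=\mathcal{R}_Q$ and $\mathcal{R}=\barcal{R}_Q$ (both positive, as noted after \eref{eq:LRmultiplication}, so their square roots exist), \thref{thm:GQOM} then shows that $\tilde{\mathcal{G}}_Q=\mathcal{R}_Q^{1/2}\mathcal{G}_Q\mathcal{R}_Q^{1/2}$ and $\tilde{\barcal{G}}_Q=\barcal{R}_Q^{1/2}\barcal{G}_Q\barcal{R}_Q^{1/2}$ remain order monotonic from ensembles into the Loewner order on superoperators.

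Granting this, the order monotonicity of $\tau_Q$ and $\bar\tau_Q$ on assemblages is word for word the first paragraph of the proof of \thref{thm:USI}: if $\{\rho_{a|x}\}\preceq\{\sigma_{b|y}\}$ then every ensemble $x$ is a coarse graining of some ensemble $y$, so $\tilde{\mathcal{G}}_Q(\{\rho_{a|x}\}_a)\le\tilde{\mathcal{G}}_Q(\{\sigma_{b|y}\}_b)$; hence every $\mathcal{F}$ feasible in the definition of $\tau_Q(\{\sigma_{b|y}\})$ is also feasible for $\{\rho_{a|x}\}$, and minimizing $\Tr(\mathcal{F})$ over the larger feasible set gives $\tau_Q(\{\rho_{a|x}\})\le\tau_Q(\{\sigma_{b|y}\})$; likewise for $\bar\tau_Q$.

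For the explicit bounds I would combine a common refinement with \eref{eq:GQbound} and the cyclicity of the superoperator trace, $\Tr(\mathcal{R}_Q^{1/2}\mathcal{X}\mathcal{R}_Q^{1/2})=\Tr(\mathcal{R}_Q\mathcal{X})$. If $\{\rho_{a|x}\}$ is unsteerable, its ensembles possess a common refinement $\{\sigma_\lambda\}$, so $\tilde{\mathcal{G}}_Q(\{\sigma_\lambda\})\ge\tilde{\mathcal{G}}_Q(\{\rho_{a|x}\}_a)$ for all $x$; thus $\mathcal{F}=\tilde{\mathcal{G}}_Q(\{\sigma_\lambda\})$ is feasible and $\tau_Q(\{\rho_{a|x}\})\le\Tr(\tilde{\mathcal{G}}_Q(\{\sigma_\lambda\}))=\Tr(\mathcal{R}_Q\mathcal{G}_Q(\{\sigma_\lambda\}))\le 1$ by \eref{eq:GQbound}. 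The same argument with $\barcal{R}_Q$, $\barcal{G}_Q$ and $\sum_\lambda\sigma_\lambda=\rho_B$ gives $\bar\tau_Q(\{\rho_{a|x}\})\le 1-\tr(Q\rho_B)/\tr(Q)$. The measurement bounds follow by homogeneity: $\mathcal{G}_Q$, $\barcal{G}_Q$, and hence $\tilde{\mathcal{G}}_Q$, $\tilde{\barcal{G}}_Q$, $\tau_Q$, $\bar\tau_Q$, all scale linearly under $\rho_a\mapsto c\rho_a$; since $\{M_{a|x}/d\}$ is an unsteerable state assemblage for $\rho_B=\mathbf{1}/d$ whenever $\{M_{a|x}\}$ is jointly measurable, applying the previous item to it and multiplying through by $d$ yields $\tau_Q(\{M_{a|x}\})\le d$ and $\bar\tau_Q(\{M_{a|x}\})\le d(1-1/d)=d-1$.

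I do not anticipate a serious obstacle, since once the two facts above are in hand the whole argument is a mechanical copy of \thref{thm:USI}. The one step that would need a little care if one wanted a self-contained treatment is the positivity of $\barcal{R}_Q=\mathcal{R}_Q-\douter{Q}{Q}/\tr(Q)$, on which the existence of $\barcal{R}_Q^{1/2}$ rests; after diagonalizing $Q$ this comes down to the Cauchy--Schwarz inequality $|\tr(QS)|^2\le\tr(Q)\tr(QS^\dag S)$ together with the fact that the off-diagonal contribution to $\dbra{S}\mathcal{R}_Q\dket{S}$ is nonnegative.
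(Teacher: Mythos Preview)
Your proposal is correct and is exactly the ``same reasoning as in the proof of \thref{thm:USI}'' that the paper invokes in lieu of a written-out proof; you have simply made explicit the ingredients the paper leaves implicit, namely that conjugation by $\mathcal{R}_Q^{1/2}$ (respectively $\barcal{R}_Q^{1/2}$) preserves the Loewner order so that $\tilde{\mathcal{G}}_Q$ and $\tilde{\barcal{G}}_Q$ inherit order monotonicity from \thref{thm:GQOM}, and that $\Tr(\mathcal{R}_Q^{1/2}\mathcal{X}\mathcal{R}_Q^{1/2})=\Tr(\mathcal{R}_Q\mathcal{X})$ so that \eref{eq:GQbound} furnishes the trace bounds. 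Your derivation of the measurement bounds via the homogeneity $\tau_Q(\{cM_{a|x}\})=c\,\tau_Q(\{M_{a|x}\})$ applied to the unsteerable state assemblage $\{M_{a|x}/d\}$ for $\rho_B=\mathbf{1}/d$ is also a clean way to close the argument.
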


To illustrate the power of \thref{thm:USI2}, let us consider a measurement assemblage $\{M_{a|x}\}$ composed of two different symmetric informationally complete measurements (SICs) \cite{Zaun11, ReneBSC04}. Since SICs form 2-designs and tight informationally complete measurements \cite{Scot06, ZhuE11, Zhu12the,ApplFZ15G}, we have 
$\mathcal{G}(\{M_{a|1}\}_a)=\mathcal{G}(\{M_{a|2}\}_a)=(\mathbf{I}+\douter{1}{1})/(d+1)$ and $\tau(\{M_{a|x}\})= d$. If $Q$ is a generic quantum state, then  $\mathcal{G}_Q(\{M_{a|1}\}_a)\neq\mathcal{G}_Q(\{M_{a|2}\}_a)$, so that $\tau_Q(\{M_{a|x}\})>d$.
 In this case, the inequalities in \thref{thm:USI2}  can detect incompatibility (steering) that cannot be detected by  \thref{thm:USI} thanks to the choice in  the probe $Q$.

More general steering inequalities can be derived by  considering the effect of filtering. The following proposition is an easy generalization of a result in \rcite{UolaBGP15}.
\begin{proposition}\label{pro:FilterSteer}
The two assemblages $\{V\rho_{a|x}V^\dag\}$ and  $\{V\rho_{a|x}^\rmT V^\dag\}$ (unnormalized) are both unsteerable for any operator $V$ if $\{\rho_{a|x}\}$ is unsteerable.
When $V$ is invertible, $\{V\rho_{a|x}V^\dag\}$,  $\{V\rho_{a|x}^\rmT V^\dag\}$, and $\{\rho_{a|x}\}$ are simultaneously steerable or not.
\end{proposition}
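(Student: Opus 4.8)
The plan is to prove Proposition~\ref{pro:FilterSteer} directly from the definition of unsteerability in terms of local hidden state (LHS) models, which is the most transparent route and avoids any appeal to the order-monotonic machinery.

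First I would handle the transformation $\{\rho_{a|x}\}\mapsto\{V\rho_{a|x}V^\dag\}$. Suppose $\{\rho_{a|x}\}$ is unsteerable, so that there is an ensemble $\{\sigma_\lambda\}$ for $\rho_B$ and stochastic maps $p(a|x,\lambda)$ with $\rho_{a|x}=\sum_\lambda p(a|x,\lambda)\,\sigma_\lambda$. Conjugating by $V$ is linear, so $V\rho_{a|x}V^\dag=\sum_\lambda p(a|x,\lambda)\,(V\sigma_\lambda V^\dag)$, and $\{V\sigma_\lambda V^\dag\}$ is again a collection of positive (unnormalized) operators; hence $\{V\sigma_\lambda V^\dag\}$ is a common refinement of all the ensembles in $\{V\rho_{a|x}V^\dag\}$, so this assemblage is unsteerable. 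For the transpose version, note $(\rho_{a|x})^\rmT=\sum_\lambda p(a|x,\lambda)\,\sigma_\lambda^\rmT$, and transposition also preserves positivity, so the same argument with $V\sigma_\lambda^\rmT V^\dag$ in place of $V\sigma_\lambda V^\dag$ shows $\{V\rho_{a|x}^\rmT V^\dag\}$ is unsteerable. (One should remark that these are unnormalized assemblages; the definition of LHS model in the paper is stated for the $\rho_{a|x}$ as unnormalized reduced states, so normalization is not an issue, but I would add a one-line comment to be safe.)

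Next I would treat the converse when $V$ is invertible. The point is that $X\mapsto VXV^\dag$ and $X\mapsto VX^\rmT V^\dag$ are then invertible linear maps on operator space with inverses $Y\mapsto V^{-1}Y(V^{-1})^\dag$ and $Y\mapsto (V^{-1}Y(V^{-1})^\dag)^\rmT$ respectively, both again of the same form (the second because $V^{-1}((\cdot)^\rmT)(V^{-1})^\dag$ composed with transpose is $Y\mapsto (V^{-1})^\ast Y^\rmT \overline{V}^{-1}{}^\dag$... here I would simply observe that transposition is an involution and conjugation by an invertible operator is invertible, so both maps are bijections whose inverses have exactly the same structure, $W(\cdot)W^\dag$ or $W(\cdot)^\rmT W^\dag$ for a suitable invertible $W$). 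Therefore the forward implication already proved, applied to the inverse map, gives the reverse implication. Concretely: if $\{V\rho_{a|x}V^\dag\}$ is unsteerable, then applying the forward statement with operator $V^{-1}$ shows $\{V^{-1}(V\rho_{a|x}V^\dag)(V^{-1})^\dag\}=\{\rho_{a|x}\}$ is unsteerable; likewise $\{V\rho_{a|x}^\rmT V^\dag\}$ unsteerable implies, applying the transpose-form forward statement with $V^{-1}$ and using $((\rho_{a|x})^\rmT)^\rmT=\rho_{a|x}$, that $\{\rho_{a|x}\}$ is unsteerable. Chaining these equivalences establishes that $\{V\rho_{a|x}V^\dag\}$, $\{V\rho_{a|x}^\rmT V^\dag\}$, and $\{\rho_{a|x}\}$ are simultaneously steerable or not.

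I do not expect a genuine obstacle here; the only place requiring a little care is bookkeeping the transpose together with conjugation so that the class of maps "$W(\cdot)W^\dag$ and $W(\cdot)^\rmT W^\dag$" is manifestly closed under composition and inversion — once that is spelled out, everything reduces to the linearity-and-positivity argument above. The statement is attributed as an easy generalization of a result of Ref.~\cite{UolaBGP15}, and indeed the proof is just the observation that the LHS structure is preserved by any positive linear map of this particular (invertible) form. If one prefers, an alternative but essentially equivalent route is to invoke the characterization "unsteerable $\iff$ all ensembles have a common refinement" directly, which is what the argument above does implicitly.
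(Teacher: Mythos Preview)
Your argument is correct. The paper does not actually supply a proof of this proposition; it merely states it as ``an easy generalization of a result in Ref.~\cite{UolaBGP15}.'' Your direct LHS-model argument---conjugating the hidden states $\sigma_\lambda$ (or $\sigma_\lambda^\rmT$) by $V$ and using linearity plus positivity, then inverting with $V^{-1}$ for the converse---is exactly the natural proof and is presumably what the authors had in mind. The slight hand-wringing about closure of the class ``$W(\cdot)W^\dag$ and $W(\cdot)^\rmT W^\dag$'' under inversion is unnecessary: for the converse you only need the single observation that applying the forward statement with $V^{-1}$ to $\{V\rho_{a|x}V^\dag\}$ recovers $\{\rho_{a|x}\}$, and applying the transpose-form forward statement with $V^{-1}$ to $\{V\rho_{a|x}^\rmT V^\dag\}$ yields $\{(\rho_{a|x}^\rmT)^\rmT\}=\{\rho_{a|x}\}$, which you already say.
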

When Bob's state $\rho_B$ is invertible, \thref{thm:USI2} and 
\pref{pro:FilterSteer} imply that 
any unsteerable assemblage $\{\rho_{a|x}\}$ satisfies  
\begin{equation}\label{eq:SteeringIneqGQrhoB}
\tau_Q(\{\rho_B^{-\frac{1}{2}} \rho_{a|x}\rho_B^{-\frac{1}{2}}\})\leq d,
\quad
 \bar{\tau}_Q(\{\rho_B^{-\frac{1}{2}}\rho_{a|x}\rho_B^{-\frac{1}{2}}\})\leq
d-1.
\end{equation}

Until now, we  discuss steering  in terms of Bob's state assemblage $\{\rho_{a|x}\}$.
At this point, it is instructive to consider steering of Bob's state by Alice's measurements as described by the assemblage 
$\{A_{a|x}\}$, which is the physical situation illustrated in \fref{fig:steering}.
Suppose they share a pure bipartite state $\rho$ of full Schmidt rank, which has the Schmidt decomposition $\rho=\sum_{j,k}\lambda_j\lambda_k |jj\rangle\langle kk|$. Then the reduced states of Alice and Bob have the same form with respect to the Schmidt basis 
$\rho_{A}=\rho_{B}=\sum_j \lambda_j^2 |j\rangle\langle j|$, and the state assemblage $\{\rho_{a|x}\}$ for Bob takes on the form
$\rho_{a|x}=\rho_B^{1/2}A_{a|x}^\rmT\rho_B^{1/2}$ \cite{Haya06book,UolaBGP15}. Therefore,
\begin{equation}\label{eq:H}
\begin{aligned}
{\tau}_Q(\{\rho_B^{-1/2} \rho_{a|x}\rho_B^{-1/2}\})
={\tau}_Q(\{A_{a|x}^\rmT\}),
\\
\bar{\tau}_Q(\{\rho_B^{-1/2} \rho_{a|x}\rho_B^{-1/2}\})=\bar{\tau}_Q(\{A_{a|x}^\rmT\}).
 \end{aligned}
\end{equation}
As a consequence of  \esref{eq:SteeringIneqGQrhoB} and \eqref{eq:H},
$\tau(\{A_{a|x}\}) \le d$ and $\bar{\tau}(\{A_{a|x}\}) \le d-1$ if  Alice cannot steer Bob's system, note that ${\tau}(\{A_{a|x}^\rmT\}) 
={\tau}(\{A_{a|x}\})$ and $\bar{\tau}(\{A_{a|x}^\rmT\})
=\bar{\tau}(\{A_{a|x}\})$ according to \lref{lem:tauTranspose}
in the supplementary
material.
If  the   assemblage $\{A_{a|x}\}$ is composed of two MUB, then $\bar{\tau}(\{
A_{a|x}\})=2(d-1)$, 
which violates the second inequality by a factor of 2. 
Remarkably, the single steering inequality 
$\bar{\tau}(\{\rho_B^{-\frac{1}{2}}\rho_{a|x}\rho_B^{-\frac{1}{2}}\}) \le d-1$ with two measurement settings can detect
the steerability of all bipartite pure states of full Schmidt rank, 
whereas 
infinitely many inequalities linear in $\rho_{a|x}$ 
(note that our inequalities are not linear in $\rho_{a|x}$) 
are needed to achieve the same task \cite{CavaJWR09, Puse13}.  
Also,  no general recipe is known for constructing linear steering
inequalities without numerical optimization.
Therefore, our approach provides a dramatic improvement over those
alternatives known in the literatures. Further, an additional example on isotropic states is presented in the supplementary material.

In summary, we proposed a general framework for detecting and characterizing
steering based on simple information theoretic ideas.
Based on the data-processing inequality for the extended R\'enyi relative entropy of
order~2, we then introduced a family of universal steering inequalities that are applicable
to arbitrary assemblages and  have a simple interpretation. 
As illustrations,
we showed unbounded violation of a steering inequality for assemblages constructed
from MUB and provided a single steering inequality that can detect all bipartite
pure states of full Schmidt rank. 
Our work established intriguing  connections
among a number of fascinating subjects, including  information
theory, quantum foundations, and geometry of quantum state space, which are
of interest to researchers from diverse fields.
In addition, our work has an intimate connection to quantum estimation theory.
Indeed, our \thref{thm:GQOM} can be applied to prove  the  data-processing inequality
for Fisher information, and vice versa (cf. the supplementary material). Also, our study allows to derive and generalize many results in quantum estimation theory, which will be presented in another paper.

\section*{Acknowledgments}
HZ is grateful to Matthew Pusey and Joshua Combes for comments and discussions. This work
is supported  by Perimeter Institute for Theoretical Physics. Research at
Perimeter Institute is supported by the Government of Canada through Industry
Canada and by the Province of Ontario through the Ministry of Research and
Innovation. HZ also acknowledges financial support from the Excellence Initiative
of the German Federal and State Governments
(ZUK 81) and the DFG. MH is partially supported by a MEXT Grant-in-Aid for
Scientific Research (A) No. 23246071 and the National Institute of Information
and Communication Technology (NICT), Japan.
The Centre for Quantum Technologies is funded by the
Singapore Ministry of Education and the National Research Foundation
as part of the Research Centres of Excellence programme.
LC was supported by the NSF of China (Grant No. 11501024), and the Fundamental
Research Funds for the Central Universities (Grant Nos. 30426401 and 30458601).

\bigskip
\bibliographystyle{apsrev4-1}
\bibliography{all_references}

 \clearpage
\setcounter{equation}{0}
\setcounter{figure}{0}
\setcounter{table}{0}
\setcounter{theorem}{0}
\setcounter{lemma}{0}
\setcounter{remark}{0}

\makeatletter
\renewcommand{\theequation}{S\arabic{equation}}
\renewcommand{\thefigure}{S\arabic{figure}}
\renewcommand{\thetable}{S\arabic{table}}
\renewcommand{\thetheorem}{S\arabic{theorem}}
\renewcommand{\thelemma}{S\arabic{lemma}}
\renewcommand{\theremark}{S\arabic{remark}}


 \onecolumngrid
 \begin{center}
 \textbf{\large Supplementary material: Universal steering inequalities}
 \end{center}
In this supplementary material
we prove \thref{thm:GQOM} in the main text
 by proving the data-processing inequality for the extended R\'enyi relative entropy of order~2.
We also explain the connection between \thref{thm:GQOM} and  generalized data-processing inequalities for generalized relative entropies as well as  the data-processing inequality for 
Fisher information.  
Quite surprisingly, \thref{thm:GQOM} embodies a generalization of two distinct
data-processing inequalities. This observation reveals an intriguing connection between the R\'enyi relative entropy and Fisher information, which deserves further study.
In addition, we derive an equality between $\tau(\{\rho_{a|x}\})$ and $\bar{\tau}(\{\rho_{a|x}\})$, and prove \psref{pro:MSA} and \ref{pro:MSAm} in the main text,
which are needed for determining maximally steerable assemblages. We also
show  that the  functions $\tau(\{\rho_{a|x}\})$ and $\bar{\tau}(\{\rho_{a|x}\})$
introduced in the main text are invariant under transposition of $\rho_{a|x}$.
Finally, we  consider
the steerability of isotropic states as another illustration of  our approach.

\section{Proof of \thref{thm:GQOM}}
\begin{proof}
Let $A=\{A_k\}$ and $B=\{B_j\}$ be two sets of positive operators that satisfy
 $B_j=\sum_k\Lambda_{jk}A_k$ for an arbitrary stochastic matrix~$\Lambda$.
Let $C$ be an arbitrary Hermitian operator; then
\begin{equation}
\dbra{C}\mathcal{G}_Q(A)\dket{C}=\sum_k \frac{v_k^2}{p_k}, \quad \dbra{C}\mathcal{G}_Q(B)
\dket{C}=\sum_j \frac{u_j^2}{q_j},
\end{equation}
where  $p_k=\tr(Q A_k)$, $q_j=\tr(Q
B_j)$, $v_k=\tr(A_k C)$, and $u_j=\tr(B_j C)$ satisfy  $q_j=\sum_k\Lambda_{jk} p_k$ and $u_j=\sum_k\Lambda_{jk}  v_k$.
According to \lref{lem:DPIRenyi2} below, $\dbra{C}\mathcal{G}_Q(B)\dket{C}\leq
\dbra{C}\mathcal{G}_Q(A)\dket{C}$, which implies that $\mathcal{G}_Q(B)\leq
\mathcal{G}_Q(A)$ and $\mathcal{G}_Q(\cdot)$ is order monotonic. By the same
token, so is $\barcal{G}_Q(\cdot)$.
Alternatively, the latter conclusion follows from the observation that $\barcal{G}_Q(\cdot)=\bid\mathcal{G}_Q(\cdot)\bid$.\end{proof}

\begin{remark}
$\{A_k\}$ and $\{B_j\}$ in the above proof are not necessarily normalized
assemblages as long as they are connected by a stochastic matrix. Therefore,
\thref{thm:GQOM} is applicable for both normalized and unnormalized assemblages.
\end{remark}

Let $\vec{p}$ and $\vec{v}$ be two real vectors of the same length and
$p_k>0$ for all $k$. Following the main text, the extended R\'enyi relative entropy of order~2 between the two vectors is defined as
$D_2(\vec{v}\| \vec{p}):=\log \sum_k \frac{v_k^2}{p_k}$.

\begin{lemma}\label{lem:DPIRenyi2}
Given  two real vectors $\vec{p}$ and $\vec{v}$ as above,
let $\vec{q}=\Lambda
\vec{p}$ and $\vec{u}=\Lambda \vec{v}$, where $\Lambda$ is a stochastic matrix.
Then $D_2(\vec{u}\| \vec{q})\leq D_2(\vec{v}\| \vec{p})$, that is,
\begin{equation}\label{eq:DPIRenyi2}
\sum_j \frac{u_j^2}{q_j}\leq \sum_k \frac{v_k^2}{p_k}.
\end{equation}
\end{lemma}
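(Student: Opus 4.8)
The plan is to prove the inequality term by term on the output index $j$ and then sum. Concretely, I would first establish the pointwise bound
\be\label{eq:plan-pointwise}
\frac{u_j^2}{q_j}\le \sum_k \Lambda_{jk}\,\frac{v_k^2}{p_k}
\ee
for every $j$ with $q_j=\sum_k\Lambda_{jk}p_k>0$. This is precisely the statement that the perspective function $g(v,p):=v^2/p$ is jointly convex on $\mathbb{R}\times(0,\infty)$, applied to the convex combination $(u_j,q_j)=\sum_k \Lambda_{jk}(v_k,p_k)$; equivalently it follows in one line from the Cauchy--Schwarz inequality by writing $u_j=\sum_k \bigl(\sqrt{\Lambda_{jk}p_k}\bigr)\bigl(\sqrt{\Lambda_{jk}/p_k}\,v_k\bigr)$, so that $u_j^2\le \bigl(\sum_k\Lambda_{jk}p_k\bigr)\bigl(\sum_k\Lambda_{jk}v_k^2/p_k\bigr)=q_j\sum_k\Lambda_{jk}v_k^2/p_k$.

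Second, I would sum the pointwise bound \eqref{eq:plan-pointwise} over $j$ and interchange the order of summation:
\be\label{eq:plan-sum}
\sum_j \frac{u_j^2}{q_j}\le \sum_j\sum_k \Lambda_{jk}\,\frac{v_k^2}{p_k}=\sum_k\Bigl(\sum_j \Lambda_{jk}\Bigr)\frac{v_k^2}{p_k}=\sum_k \frac{v_k^2}{p_k},
\ee
where the last step uses that $\Lambda$ is stochastic in the relevant sense, namely $\sum_j\Lambda_{jk}=1$ for every $k$ (this is the normalization $\sum_a p(a|b)=1$ inherited from the coarse-graining map of the main text, which is exactly the convention under which $\vec q=\Lambda\vec p$ preserves total mass).

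There is essentially no deep obstacle here: the whole content is the joint convexity of $v^2/p$, equivalently a single application of Cauchy--Schwarz, and the only points that require a moment's care are getting the direction of the stochasticity condition right ($\sum_j\Lambda_{jk}=1$, not $\sum_k\Lambda_{jk}=1$) and treating the vanishing-denominator case: if $q_j=0$ then, since all $p_k>0$, we must have $\Lambda_{jk}=0$ for every $k$, hence $u_j=0$, and such a term contributes $0$ under the convention $0/0=0$ — one simply restricts both sums in \eqref{eq:plan-sum} to indices with $q_j>0$. An alternative route I might take instead uses the variational identity $\sum_k v_k^2/p_k=\sup_{\vec w}\bigl(2\sum_k v_k w_k-\sum_k p_k w_k^2\bigr)$: plugging the optimizer $\vec w^{\,\ast}$ of the $(\vec u,\vec q)$ problem into the pulled-back test vector $\tilde w_k:=\sum_j\Lambda_{jk}w^{\ast}_j$ and using Jensen's inequality in the form $\sum_j\Lambda_{jk}(w^{\ast}_j)^2\ge\tilde w_k^{\,2}$ yields the claim just as directly; this is the form that generalizes most transparently to the operator-valued setting needed for \thref{thm:GQOM}.
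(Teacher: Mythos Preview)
Your proposal is correct and essentially identical to the paper's proof: both establish the same pointwise bound $u_j^2/q_j\le\sum_k\Lambda_{jk}v_k^2/p_k$ and then sum using $\sum_j\Lambda_{jk}=1$. The only cosmetic difference is that the paper phrases the pointwise step as Jensen's inequality for the convex function $x\mapsto x^2$ applied to the probability weights $\{\Lambda_{jk}p_k/q_j\}_k$, whereas you phrase it as Cauchy--Schwarz (equivalently, joint convexity of $(v,p)\mapsto v^2/p$); these are the same inequality in disguise.
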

\begin{proof}
The proof follows the same idea as in the proof of data-processing
inequalities for generalized relative entropies~\cite{Csis67,Haya06book}. It relies
on the convexity of the quadratic function.
\begin{align}
\sum_j \frac{u_j^2}{q_j}&=\sum_j q_j\biggl(\sum_k\frac{\Lambda_{jk}p_k}{q_j}
\frac{v_k}{p_k}\biggr)^2\leq  \sum_j q_j\sum_k\frac{\Lambda_{jk}p_k}{q_j}\Bigl(
\frac{v_k}{p_k}\Bigr)^2=\sum_k\frac{v_k^2}{p_k},
\end{align}
note that $\{\frac{\Lambda_{jk}p_k}{q_j}\}_k$ forms a probability distribution.
\end{proof}
\begin{remark}
Here we assume that no row of $\Lambda$ are identically 0, so that $q_j>0$ for all $j$.
\end{remark}

When $\vec{v}$ and $\vec{p}$ are probability distributions, 
$D_2(\vec{v}\| \vec{p})$ is 
R\'enyi relative entropy of order~2, which obeys a well-known data-processing inequality \cite{Reny61,Haya06book}. In \lref{lem:DPIRenyi2}, the four vectors
$\vec{q},\vec{p},  \vec{u},\vec{v}$
are not necessarily probabilities as long as the components of
$\vec{p}$ and $\vec{q}$ are positive. 
Therefore, \eref{eq:DPIRenyi2} may be understood as a generalization of the data-processing inequality for R\'enyi relative entropy of order~2.
Although this relative entropy and its variant
play important roles in information theory and cryptography theory \cite{BennBCM95,HastILL99}, 
we are not aware of 
the extension of the quantity 
$\log \sum_k \frac{v_k^2}{p_k}$ and \eref{eq:DPIRenyi2}
for general vectors in the literature.
Our work may stimulate further progress along this direction. 

\section{Generalized  data-processing inequalities}
In this section, we explore potential extension of our approach based on generalized data-processing inequalities. 
Let $f$ be a convex function defined on  real numbers; 
let $\vec{p}$ and $\vec{v}$ be two real vectors of the same length such that $p_k>0$ for all $k$. Define
the extended $f$-relative entropy as
\begin{equation}
D_f(\vec{v}\| \vec{p})=\sum_k p_k f\Bigl(\frac{v_k}{p_k}\Bigr).
\end{equation}

\begin{theorem}
$D_f(\vec{v}\|\vec{p})$ is monotonic under data processing for any convex
function $f$, that is, 
\begin{equation}\label{eq:DPIRE}
D_f(\Lambda\vec{v}\| \Lambda\vec{p})\leq D_f(\vec{v}\|\vec{p})
\end{equation}
for any stochastic matrix $\Lambda$. 
\end{theorem}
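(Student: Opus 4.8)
The plan is to follow verbatim the strategy of \lref{lem:DPIRenyi2}, which is just Csiszár's classical argument for monotonicity of $f$-divergences, now run with arbitrary real vectors in place of probability distributions. First I would set $\vec{q}=\Lambda\vec{p}$ and $\vec{u}=\Lambda\vec{v}$, so that $q_j=\sum_k\Lambda_{jk}p_k$ and $u_j=\sum_k\Lambda_{jk}v_k$. As in the remark following \lref{lem:DPIRenyi2}, I would assume that no row of $\Lambda$ vanishes identically, so that $q_j>0$ for every $j$; then the numbers $w_{jk}:=\Lambda_{jk}p_k/q_j$ are nonnegative and satisfy $\sum_k w_{jk}=1$, that is, for each fixed $j$ they form a probability distribution over the index $k$.

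The key step is Jensen's inequality. Since $u_j/q_j=\sum_k w_{jk}(v_k/p_k)$ is a convex combination of the points $v_k/p_k$, convexity of $f$ gives $f(u_j/q_j)\leq\sum_k w_{jk}f(v_k/p_k)$. Multiplying by $q_j$, summing over $j$, interchanging the two sums, and using $\sum_j\Lambda_{jk}=1$ (the stochasticity convention used throughout the paper), one obtains
\begin{equation}
D_f(\Lambda\vec{v}\|\Lambda\vec{p})=\sum_j q_j f\!\Bigl(\frac{u_j}{q_j}\Bigr)\leq\sum_j\sum_k\Lambda_{jk}p_k f\!\Bigl(\frac{v_k}{p_k}\Bigr)=\sum_k p_k f\!\Bigl(\frac{v_k}{p_k}\Bigr)=D_f(\vec{v}\|\vec{p}),
\end{equation}
which is \eref{eq:DPIRE}.

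I do not expect a genuine obstacle here; the only point requiring care is the degenerate case $q_j=0$, which is excluded by the non-vanishing-row hypothesis (alternatively, such a row forces $u_j=0$ and can simply be deleted without changing either side). It is worth noting, as a sanity check and as motivation, that the choice $f(x)=x^2$ recovers \lref{lem:DPIRenyi2} verbatim and hence, through the argument of \thref{thm:GQOM}, the order-monotonicity of $\mathcal{G}_Q$ and $\barcal{G}_Q$; other convex choices such as $f(x)=x\log x$ or $f(x)=|x|^{\alpha}$ produce further order-monotonic superoperator-valued functions and, via the mechanism of \thref{thm:USI}, additional families of universal steering inequalities.
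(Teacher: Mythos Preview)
Your proof is correct and follows precisely the route the paper indicates: the paper does not spell out a proof but remarks that the theorem follows from Jensen's inequality by the same reasoning as \lref{lem:DPIRenyi2}, which is exactly what you do. One small caveat on your closing remark: the paper explicitly leaves open whether convex functions other than $f(x)=x^2$ can be used to construct steering inequalities, and $f(x)=x\log x$ in particular is problematic since $v_k$ may be negative in the application to $\mathcal{G}_Q$---so your final sentence overstates what has been established.
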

\begin{remark}
This theorem can be proved using Jensen's inequality for a convex function
according to the same reasoning as in the proof of \lref{lem:DPIRenyi2} 
(cf. \rscite{Csis67,Haya06book}). 
\end{remark} 

When $\vec{p}$ and $\vec{v}$ are probability distributions, $D_f(\vec{v}\|\vec{p})$ 
is known as the  $f$-relative entropy  ($f$-divergence) between  $\vec{p}$ and $\vec{v}$.
It quantifies  the closeness between the two probability distributions, which
is intimated connected to mutual information. In that case, \eref{eq:DPIRE}
reduces to the data-processing inequality (also known as information-processing
inequality) for $f$-relative entropy; cf. \rcite{Csis67, Haya06book}.
In addition, we can relax the  requirement on $f$ to 
be a convex function defined over positive numbers.
When $f(x)=x\log x$, $D_f(\vec{v}\| \vec{p})$ reduces to the usual relative
entropy, and \eref{eq:DPIRE} represents the well known data-processing inequality
for the relative entropy. 
When $f(x)=x^2$, $\log D_f(\vec{v} \| \vec{p})$ coincides with the R\'enyi relative entropy of order~2, and \eref{eq:DPIRE}
takes on the same form as \eref{eq:DPIRenyi2}.

In our application, $v_k$ may take on negative values (this is the case for  $v_k=\tr(A_k C)$ in the proof of \thref{thm:GQOM}), so it is desirable to choose the function $f$ that can accept a negative entry such as $f(x)=x^2$. This is the reason we introduce the extended $f$-relative entropy.
Other potential choices include the quartic function $f(x)=x^4$.
We leave it open whether such functions can be used to construct steering inequalities.

\section{Connection with the data-processing inequality for
Fisher information}
In this section,
we explain the connection between the order-monotonic  functions $\mathcal{G}_Q(\cdot)$ and $\barcal{G}_Q(\cdot)$ in Theorem 1 and 
 Fisher information with regard to the data-processing inequality \cite{Zami98}. 
Consider a random variable $A$ as the outcome of a measurement or observation.
Suppose $A$ takes on  values over positive integers and the probability
of obtaining outcome $j$ is $p_A(j|\theta)$, where $\theta$ is the unknown
parameter. 
The \emph{Fisher information}  associated with $A$ is given by
\begin{align}\label{sym:FisherInf}
I_A(\theta)&= \sum_j
p_A(j|\theta)\Bigl(\frac{\partial \ln p_A(j|\theta)}{\partial
\theta}\Bigr)^2=\sum_j\frac{1}{p_A(j|\theta)}\Bigl(\frac{\partial
p_A(j|\theta)}{\partial \theta}\Bigr)^2.
\end{align}

\begin{lemma}\label{lem:chainRule}
Suppose $A, B$ are two random variables with joint distribution $p(j,k|\theta)$.
Then
the total Fisher information provided by the two random variables is given by\begin{equation}
I_{A,B}(\theta)=I_A(\theta)+ I_{B|A}(\theta),
\end{equation}
where
\begin{equation}
I_{B|A}(\theta)=\sum_{j,k}p(j,k|\theta)  \biggl(\frac{\partial \ln p_{B|A}(k|j;\theta)}{\partial
\theta}\biggr)^2
\end{equation}
is the conditional Fisher information, and $p_{B|A}(k|j;\theta)$ is the conditional probability distribution.
\end{lemma}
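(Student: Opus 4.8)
The plan is to prove the chain rule for Fisher information by writing the joint density as the product of a marginal and a conditional, $p(j,k|\theta)=p_A(j|\theta)\,p_{B|A}(k|j;\theta)$, and differentiating the logarithm. First I would use the identity $\partial_\theta \ln p(j,k|\theta)=\partial_\theta \ln p_A(j|\theta)+\partial_\theta \ln p_{B|A}(k|j;\theta)$, which is just the logarithm of a product. Substituting this into the definition
\begin{equation}
I_{A,B}(\theta)=\sum_{j,k}p(j,k|\theta)\Bigl(\frac{\partial \ln p(j,k|\theta)}{\partial\theta}\Bigr)^2,
\end{equation}
and expanding the square, I obtain three terms: the pure marginal term, the pure conditional term, and a cross term.

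The pure marginal term becomes $\sum_{j,k}p(j,k|\theta)\bigl(\partial_\theta \ln p_A(j|\theta)\bigr)^2$; summing over $k$ first reduces $p(j,k|\theta)$ to $p_A(j|\theta)$ and this term equals $I_A(\theta)$. The pure conditional term is precisely $I_{B|A}(\theta)$ by definition. So the key step is showing the cross term $2\sum_{j,k}p(j,k|\theta)\,\partial_\theta\ln p_A(j|\theta)\,\partial_\theta\ln p_{B|A}(k|j;\theta)$ vanishes. I would handle this by summing over $k$ for fixed $j$: the factor $\partial_\theta\ln p_A(j|\theta)$ and $p_A(j|\theta)$ pull out, leaving $\sum_k p_{B|A}(k|j;\theta)\,\partial_\theta\ln p_{B|A}(k|j;\theta)=\sum_k \partial_\theta p_{B|A}(k|j;\theta)=\partial_\theta\sum_k p_{B|A}(k|j;\theta)=\partial_\theta 1=0$.

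The main (and essentially only) obstacle is the standard regularity caveat: interchanging the $\theta$-derivative with the sum over $k$, which requires the usual dominated-convergence-type conditions on the family $\{p_{B|A}(k|j;\theta)\}$, and the assumption that all the relevant probabilities are positive so that the logarithms and the ratios $\partial_\theta p/p$ are well defined. Under these standing assumptions the cross term vanishes identically, and collecting the three pieces gives $I_{A,B}(\theta)=I_A(\theta)+I_{B|A}(\theta)$, completing the proof.
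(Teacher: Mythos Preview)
Your proof is correct and follows essentially the same approach as the paper: factor the joint distribution, expand the square of the derivative of the log, and show the cross term vanishes by summing over $k$ using $\sum_k p_{B|A}(k|j;\theta)=1$. Your explicit mention of the regularity caveats is a nice addition not spelled out in the paper's version.
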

\begin{remark}
This chain rule for Fisher information and the data-processing inequality
in \thref{thm:FIDPI} below were derived
in \rcite{Zami98} (in a slightly different form). \end{remark}

\begin{proof}
\begin{align}
&I_{A,B}(\theta)=\sum_{j,k}
p(j,k|\theta)\Bigl(\frac{\partial \ln p(j,k|\theta)}{\partial
\theta}\Bigr)^2=\sum_{j,k} p(j,k|\theta) \biggl(\frac{\partial \ln p_A(j|\theta)}{\partial
\theta}+\frac{\partial \ln p_{B|A}(k|j;\theta)}{\partial \theta}\biggr)^2\nonumber\\
&=\sum_{j,k} p(j,k|\theta) \biggl[\biggl(\frac{\partial \ln p_A(j|\theta)}{\partial
\theta}\biggr)^2+ \biggl(\frac{\partial \ln p_{B|A}(k|j;\theta)}{\partial
\theta}\biggr)^2\biggr]+  2\sum_{j,k} p(j,k|\theta)\frac{\partial \ln p_A(j|\theta)}{\partial
\theta}\frac{\partial \ln p_{B|A}(k|j;\theta)}{\partial \theta} \nonumber
\\
&=I_A(\theta) +I_{B|A}(\theta),
\end{align}
where in deriving the last equality,  we have employed the following equation
\begin{align}
&\sum_{j,k} p(j,k|\theta)\frac{\partial \ln  p_A(j|\theta)}{\partial \theta}\frac{\partial
\ln p_{B|A}(k|j;\theta)}{\partial \theta}\nonumber \\
&=\sum_j\biggl( p_A(j|\theta)\frac{\partial \ln  p_A(j|\theta)}{\partial
\theta} \sum_k p_{B|A}(k|j;\theta)\frac{\partial \ln p_{B|A}(k|j;\theta)}{\partial
\theta}\biggr)\nonumber \\
&=\sum_j \biggl( p_A(j|\theta)\frac{\partial \ln  p_A(j|\theta)}{\partial
\theta}\sum_k\frac{\partial  p_{B|A}(k|j;\theta)}{\partial \theta}\biggr)=0,
\end{align}
note that $\sum_kp_{B|A}(k|j;\theta)=1$.
\end{proof}

\begin{theorem}\label{thm:FIDPI}
Suppose $A, B$ are two random variables whose marginal  distributions  satisfy
$p_B(j|\theta)=\sum_k\Lambda_{jk}p_A(k|\theta)$, where $\Lambda$ is a stochastic
matrix  independent of the parameter $\theta$.
Then $I_B(\theta)\leq I_A(\theta)$.
\end{theorem}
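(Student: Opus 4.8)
The plan is to derive \thref{thm:FIDPI} from the chain rule of \lref{lem:chainRule}, applied twice with the two random variables interchanged. First I would realize the hypothesis $p_B(j|\theta)=\sum_k\Lambda_{jk}p_A(k|\theta)$ as a noisy channel from $A$ to $B$: introduce the joint distribution $p(a,b|\theta):=p_A(a|\theta)\,\Lambda_{ba}$, so that the conditional distribution of $B$ given $A=a$ is $p_{B|A}(b|a;\theta)=\Lambda_{ba}$, which is independent of $\theta$. One checks immediately that the $A$-marginal of $p(a,b|\theta)$ is $p_A(a|\theta)$ (using that the columns of the stochastic matrix $\Lambda$ sum to one) and that the $B$-marginal is $\sum_a\Lambda_{ba}p_A(a|\theta)=p_B(b|\theta)$, matching the hypothesis.

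Next, apply \lref{lem:chainRule} with $A$ as the first variable and $B$ as the second: $I_{A,B}(\theta)=I_A(\theta)+I_{B|A}(\theta)$. Since $p_{B|A}(b|a;\theta)=\Lambda_{ba}$ does not depend on $\theta$, every derivative $\partial\ln p_{B|A}(b|a;\theta)/\partial\theta$ vanishes, hence $I_{B|A}(\theta)=0$ and $I_{A,B}(\theta)=I_A(\theta)$. Then apply the same chain rule with the roles exchanged, $B$ first and $A$ second: $I_{A,B}(\theta)=I_B(\theta)+I_{A|B}(\theta)$. The conditional Fisher information $I_{A|B}(\theta)=\sum_{a,b}p(a,b|\theta)\bigl(\partial\ln p_{A|B}(a|b;\theta)/\partial\theta\bigr)^2$ is a sum of nonnegative terms, so $I_{A|B}(\theta)\ge 0$. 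Combining the two identities yields $I_B(\theta)=I_{A,B}(\theta)-I_{A|B}(\theta)\le I_{A,B}(\theta)=I_A(\theta)$, which is the claimed inequality.

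The only point that needs care is the legitimacy of the second application of the chain rule: although the joint law was constructed from the conditional of $B$ given $A$, \lref{lem:chainRule} is symmetric in the two variables, so it applies equally well to the decomposition through $p_{A|B}$, which here is the $\theta$-dependent Bayes posterior $p_{A|B}(a|b;\theta)=p_A(a|\theta)\Lambda_{ba}/p_B(b|\theta)$; this is precisely why $I_{A|B}(\theta)$ is generically strictly positive and the inequality is not an equality. A routine technical caveat, handled exactly as in \lref{lem:DPIRenyi2}, is that one should restrict to outcomes of positive probability so that all conditional probabilities and logarithmic derivatives are well defined; discarding zero-probability outcomes changes none of the Fisher informations involved. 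No computation beyond what is already in the proof of \lref{lem:chainRule} is required.
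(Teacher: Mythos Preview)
Your proof is correct and follows essentially the same approach as the paper: construct the joint distribution with $p_{B|A}=\Lambda$, invoke \lref{lem:chainRule} to get $I_{A,B}=I_A$ from $I_{B|A}=0$, and invoke it again with the roles swapped to get $I_{A,B}\geq I_B$. The paper states this tersely in one sentence, whereas you have spelled out the construction and both applications of the chain rule explicitly, but the argument is the same.
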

This theorem is an immediate consequence of \lref{lem:chainRule}, given  that
$I_{B|A}(\theta)=0$ since the conditional probability distribution $p_{B|A}$
is independent of $\theta$.

In the rest of this section we show that the superoperators $\mathcal{G}_Q(A)$
and  $\barcal{G}_Q(A)$ introduced in the main text may be understood as Fisher
information matrices in superoperator form when $Q$ is a quantum state and
$A=\{A_k\}$ is a POVM.   Note that  $\tr(Q A_k)$ is
the probability of obtaining outcome $A_k$
upon measuring $A$ on $Q$  and that
\begin{equation}
\mathcal{G}_Q(A)=\sum_k \frac{\douter{A_k}{A_k} }{\tr(Q A_k)},\quad \barcal{G}_Q(A)=\sum_k
\frac{\douter{\bar{A}_k}{\bar{A}_k} }{\tr(Q A_k)}.
\end{equation}
Suppose $Q$ is a
quantum state that depends on the parameter $\theta$. Then the Fisher information
concerning $\theta$ provided
by the measurement $A$ is 
\begin{equation}
I_A(\theta)=\sum_k
\frac{1}{\tr(Q A_k)}\tr\Bigl(\frac{\partial
Q}{\partial \theta}A_k\Bigr)^2=\Dbra{\frac{\partial
Q}{\partial \theta}}\mathcal{G}_Q(A)\Dket{\frac{\partial
Q}{\partial \theta}}=\Dbra{\frac{\partial
Q}{\partial \theta}}\barcal{G}_Q(A)\Dket{\frac{\partial
Q}{\partial \theta}},
\end{equation}
where the third equality follows from the fact that $\partial
Q/\partial \theta$ is traceless. Therefore, $\mathcal{G}_Q(A)$ and  $\barcal{G}_Q(A)$
are Fisher information matrices in disguise, and  \thref{thm:GQOM} in the
main text embodies generalized data-processing inequalities for Fisher information.
As an implication of our result, the Fisher information data-processing inequality
is also valid for incomplete observations or measurements.  These results are  of interest beyond the focus of this paper,
such as in quantum metrology \cite{CombFJC14,Ferr14}.

\section{Connection between $\tau(\{\rho_{a|x}\})$ and $\bar{\tau}(\{\rho_{a|x}\})$}

\begin{proposition}\label{pro:taudif}Any state assemblage $\{\rho_{a|x}\}$ satisfies
$\tau(\{\rho_{a|x}\})=\bar{\tau}(\{\rho_{a|x}\})+1/d$. Any measurement assemblage $\{M_{a|x}\}$ satisfies  $\tau(\{M_{a|x}\})=\bar{\tau}(\{M_{a|x}\})+1$.
\end{proposition}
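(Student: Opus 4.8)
The plan is to show that the two semidefinite programs in \eref{eq:tau} differ only by a fixed (i.e.\ $x$‑independent) shift of the superoperator variable $\mathcal{F}$, so that their optimal values differ by the trace of that shift. Everything reduces to one algebraic identity relating $\mathcal{G}(\{\rho_{a|x}\}_a)$ and $\barcal{G}(\{\rho_{a|x}\}_a)$.

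\emph{The key identity.} First I would decompose $\dket{\rho_{a|x}}=\dket{\bar{\rho}_{a|x}}+\tfrac{\tr(\rho_{a|x})}{d}\dket{1}$ and expand the outer product $\douter{\rho_{a|x}}{\rho_{a|x}}$. Dividing by $\tr(\rho_{a|x})$ and summing over $a$, while using $\sum_a\rho_{a|x}=\rho_B$ (hence $\sum_a\bar{\rho}_{a|x}=\bar{\rho}_B$ and $\sum_a\tr(\rho_{a|x})=\tr(\rho_B)$) for every setting $x$, one obtains
\begin{equation}
\mathcal{G}(\{\rho_{a|x}\}_a)=\barcal{G}(\{\rho_{a|x}\}_a)+\Delta,\qquad
\Delta:=\frac{1}{d}\bigl(\douter{\bar{\rho}_B}{1}+\douter{1}{\bar{\rho}_B}\bigr)+\frac{\tr(\rho_B)}{d^{2}}\douter{1}{1}.
\end{equation}
The crucial feature is that $\Delta$ is independent of $x$ and Hermitian (since $\bar{\rho}_B$ is). Using $\Tr(\douter{A}{B})=\tr(B^\dag A)$, the tracelessness of $\bar{\rho}_B$, and $\Tr(\douter{1}{1})=\tr(1)=d$, its superoperator trace is $\Tr(\Delta)=\tr(\rho_B)/d$. (Equivalently one can start from the identity $\barcal{G}(\cdot)=\bid\,\mathcal{G}(\cdot)\,\bid$ noted in the proof of \thref{thm:GQOM}, together with $\bid=\mathbf{I}-\douter{1}{1}/d$, and arrive at the same $\Delta$.)

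\emph{Shifting the program.} Since $\Delta$ is Hermitian, the translation $\mathcal{F}\mapsto\mathcal{F}-\Delta$ is a bijection of the set of Hermitian superoperators onto itself. By the key identity, $\mathcal{F}\geq\mathcal{G}(\{\rho_{a|x}\}_a)$ for all $x$ holds if and only if $(\mathcal{F}-\Delta)\geq\barcal{G}(\{\rho_{a|x}\}_a)$ for all $x$, while $\Tr(\mathcal{F})=\Tr(\mathcal{F}-\Delta)+\Tr(\Delta)$. Performing this substitution in the minimization \eref{eq:tau} gives $\tau(\{\rho_{a|x}\})=\bar{\tau}(\{\rho_{a|x}\})+\Tr(\Delta)$. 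For a state assemblage $\tr(\rho_B)=1$, so $\Tr(\Delta)=1/d$. For a measurement assemblage the role of $\rho_B$ is played by $\sum_aM_{a|x}=1$, whose traceless part vanishes and for which $\tr(1)=d$; hence $\Tr(\Delta)=d/d=1$. This yields both equalities.

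The only point that needs real care is that the difference $\Delta$ is the \emph{same} for all ensembles of the assemblage; this is exactly where the defining constraint $\sum_a\rho_{a|x}=\rho_B$ (independent of $x$) is used, and without it the reduction to a constant shift would break down. The remaining items — Hermiticity of $\Delta$, the evaluation $\Tr(\douter{A}{B})=\tr(B^\dag A)$, and feasibility of the programs (e.g.\ $\mathcal{F}=\sum_x\mathcal{G}(\{\rho_{a|x}\}_a)$ is always admissible, so the minima are attained) — are routine.
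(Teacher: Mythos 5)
Your proof is correct and takes essentially the same route as the paper's: you compute the $x$-independent difference $\Delta=\mathcal{G}(\{\rho_{a|x}\}_a)-\barcal{G}(\{\rho_{a|x}\}_a)$ (your expression in terms of $\bar{\rho}_B$ is algebraically identical to the paper's form written with $\rho_B$), observe $\Tr(\Delta)=\tr(\rho_B)/d$, and translate the semidefinite program by $\Delta$. The only cosmetic difference is that the paper states the final step as two inequalities ($\bar{\tau}\leq\tau-\Tr(\Delta)$ and $\tau\leq\bar{\tau}+\Tr(\Delta)$) rather than as a bijection of feasible sets, so there is nothing substantive to add.
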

\begin{proof}Let $\Delta_x=\mathcal{G}(\{\rho_{a|x}\}_a)-\barcal{G}(\{\rho_{a|x}\}_a)$. Calculation shows that $\Delta_x$ is independent of $x$, 
\begin{equation}
\Delta_x=\Delta:=\frac{\douter{1}{\rho_B}+\douter{\rho_B}{1}}{d}-\frac{\tr(\rho_B)}{d^2}(\douter{1}{1}),\quad \Tr(\Delta_x)=\Tr(\Delta)=\frac{\tr(\rho_B)}{d}=\frac{1}{d}.
\end{equation}
If $\mathcal{F}\geq \mathcal{G}(\{\rho_{a|x}\}_a)$ for all $x$, then $\mathcal{F}-\Delta\geq \barcal{G}(\{\rho_{a|x}\}_a)$ for all $x$. So $\bar{\tau}(\{\rho_{a|x}\})\leq \tau(\{\rho_{a|x}\})-\Tr(\Delta)=\tau(\{\rho_{a|x}\})-1/d$. Similarly, we have   the inequality $\tau(\{\rho_{a|x}\})\leq \bar{\tau}(\{\rho_{a|x}\})+1/d$. It follows that $\tau(\{\rho_{a|x}\})=\bar{\tau}(\{\rho_{a|x}\})+1/d$. The equality $\tau(\{M_{a|x}\})=\bar{\tau}(\{M_{a|x}\})+1$ follows from the same reasoning.
\end{proof}

\section{Proofs of \psref{pro:MSA} and \ref{pro:MSAm}}
In this section we prove \psref{pro:MSA} and \ref{pro:MSAm} in the main
text, which are needed for determining maximally steerable assemblages.

\begin{proof}[Proof of \pref{pro:MSA}]
According to \lref{lem:SLDbound} below, $\tau(\{M_{a|x}\})\leq \Tr(\mathbf{I})
= d^2$ and $\bar{\tau}(\{M_{a|x}\})\leq \Tr(\bid) =d^2-1$. 
Here $\mathbf{I}$ is the identity superoperator and $\bid$ is the projector
onto
the space of traceless operators. By the same token,  
\begin{equation}
\tau(\{\rho_{a|x}\})\leq \Tr(\mathcal{R}_{\rho_B})
= d\tr(\rho_B)=d,\quad 
\bar{\tau}(\{M_{a|x}\})\leq \Tr(\bid\mathcal{R}_{\rho_B}\bid) =\frac{(d^2-1)\tr(\rho_B)}{d}=d-\frac{1}{d},
\end{equation}
 where $\mathcal{R}_{\rho_B}$
is defined as in \eref{eq:LRmultiplication} in the main text.
However, $\bid\mathcal{R}_{\rho_B}\bid$ is in general not equal to $\barcal{R}_{\rho_B}$  in \eref{eq:LRmultiplication}. 
\end{proof}
\begin{remark}
According to \pref{pro:taudif}, the upper bound in $\tau(\{M_{a|x}\})\leq d^2$ is saturated if and only if the upper bound in $\bar{\tau}(\{M_{a|x}\})\leq d^2-1$ is saturated. Similarly, the two upper bounds in $\tau(\{\rho_{a|x}\})\leq d$ and  $\bar{\tau}(\{\rho_{a|x}\})\leq d-1/d$ are simultaneously  saturated or not. 
\end{remark}

\begin{proof}[Proof of \pref{pro:MSAm}]
According to \lref{lem:SumBound} below and \eref{eq:Gbound} in the main text,
$\bar{\tau}(\{M_{a|x}\})\leq
\sum_x \Tr(\barcal{G}(\{M_{a|x}\}_a))\leq m(d-1)$. The first inequality is
saturated if and only if  the POVMs in the assemblage are mutually orthogonal,
and the second is saturated if and only if all the POVMs have rank 1. The
same proof applies to state assemblages.
\end{proof}

In the rest of this section we prove the two auxiliary lemmas needed in the
proofs of \psref{pro:MSA} and \ref{pro:MSAm}.
\begin{lemma}\label{lem:SLDbound}
Any  POVM $\{M_a\}$ satisfies  $\mathcal{G}(\{M_a\})\leq \mathbf{I}$ and
 $\barcal{G}(\{M_a\})\leq \bid$.
Any  ensemble $\{\rho_a\}$  for the state~$\rho_B$ satisfies  $\mathcal{G}(\{\rho_a\})\leq
\mathcal{R}_{\rho_B}$ and $\barcal{G}(\{\rho_a\})\leq \bid\mathcal{R}_{\rho_B}\bid$.
\end{lemma}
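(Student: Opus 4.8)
The plan is to derive all four L\"owner inequalities from a single scalar estimate---the Cauchy--Schwarz inequality in the Hilbert--Schmidt space---after one preliminary reduction. First I would record a structural observation: each superoperator occurring in \lref{lem:SLDbound}, namely $\mathcal{G}(\cdot)$, $\barcal{G}(\cdot)$, $\mathbf{I}$, $\bid$, and $\mathcal{R}_{\rho_B}$, is self-adjoint with respect to the Hilbert--Schmidt inner product and preserves the real structure of the operator space, i.e.\ it maps Hermitian operators to Hermitian operators and anti-Hermitian ones to anti-Hermitian ones. Decomposing a general test operator as $S=H_1+\rmi H_2$ with $H_1,H_2$ Hermitian, the two cross terms in $\dbra{S}\mathcal{X}\dket{S}$ are then real and cancel, so $\dbra{S}\mathcal{X}\dket{S}=\dbra{H_1}\mathcal{X}\dket{H_1}+\dbra{H_2}\mathcal{X}\dket{H_2}$; hence it suffices to verify each inequality on Hermitian test operators.

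Next, for a Hermitian operator $C$ and a POVM element $M_a$, I would write $\tr(M_aC)=\dinner{M_a^{1/2}}{M_a^{1/2}C}$ and apply Cauchy--Schwarz to obtain $(\tr(M_aC))^2\le\tr(M_a)\tr(M_aC^2)$. Dividing by $\tr(M_a)$, summing over $a$, and using $\sum_aM_a=1$ gives $\dbra{C}\mathcal{G}(\{M_a\})\dket{C}\le\tr(C^2)=\dbra{C}\mathbf{I}\dket{C}$, i.e.\ $\mathcal{G}(\{M_a\})\le\mathbf{I}$. For the barred version I would use $\dket{\bar M_a}=\bid\dket{M_a}$, which gives $\barcal{G}(\{M_a\})=\bid\,\mathcal{G}(\{M_a\})\,\bid$; conjugating $\mathcal{G}(\{M_a\})\le\mathbf{I}$ by the self-adjoint projector $\bid$ (conjugation preserves the L\"owner order) then yields $\barcal{G}(\{M_a\})\le\bid\,\mathbf{I}\,\bid=\bid$.

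The ensemble case runs along identical lines: the same Cauchy--Schwarz step gives $(\tr(\rho_aC))^2\le\tr(\rho_a)\tr(\rho_aC^2)$, and summing with $\sum_a\rho_a=\rho_B$ gives $\dbra{C}\mathcal{G}(\{\rho_a\})\dket{C}\le\tr(\rho_BC^2)$. It then remains only to identify the right-hand side with $\dbra{C}\mathcal{R}_{\rho_B}\dket{C}$: from \eref{eq:LRmultiplication} one has $\dbra{C}\mathcal{R}_{\rho_B}\dket{C}=\tfrac12\tr\!\big(C(C\rho_B+\rho_BC)\big)=\tr(\rho_BC^2)$ for Hermitian $C$, so the bound reads $\mathcal{G}(\{\rho_a\})\le\mathcal{R}_{\rho_B}$. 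Conjugating by $\bid$ once more, exactly as above, gives $\barcal{G}(\{\rho_a\})=\bid\,\mathcal{G}(\{\rho_a\})\,\bid\le\bid\,\mathcal{R}_{\rho_B}\,\bid$, which finishes all four inequalities.

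The main obstacle is a bookkeeping issue rather than a conceptual one: the factorization $\tr(\rho_aC)=\dinner{\rho_a^{1/2}}{\rho_a^{1/2}C}$ used for Cauchy--Schwarz is asymmetric, so for a genuinely non-Hermitian test operator it does not, on its face, reproduce the symmetrically ordered expression $\dbra{S}\mathcal{R}_{\rho_B}\dket{S}=\tfrac12\tr\!\big(\rho_B(SS^\dag+S^\dag S)\big)$. The reduction to Hermitian test operators in the first step is exactly what disposes of this; equivalently, one can average the two Cauchy--Schwarz bounds coming from $\tr(\rho_aS)=\dinner{\rho_a^{1/2}}{\rho_a^{1/2}S}=\dinner{\rho_a^{1/2}S^\dag}{\rho_a^{1/2}}$, which together give $|\tr(\rho_aS)|^2\le\tfrac12\tr(\rho_a)\tr\!\big(\rho_a(SS^\dag+S^\dag S)\big)$. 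Beyond that, everything is a one-line Cauchy--Schwarz, the normalizations $\sum_aM_a=1$ and $\sum_a\rho_a=\rho_B$, and monotonicity of the L\"owner order under conjugation by $\bid$.
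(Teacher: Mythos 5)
Your proposal is correct and follows essentially the same route as the paper's own proof: a Hilbert--Schmidt Cauchy--Schwarz bound $(\tr(M_aC))^2\le\tr(M_a)\tr(M_aC^2)$ (resp.\ with $\rho_a$) summed over $a$, the identification $\dbra{C}\mathcal{R}_{\rho_B}\dket{C}=\tr(\rho_BC^2)$, and conjugation by $\bid$ for the barred inequalities. The only addition is your explicit reduction to Hermitian test operators, which the paper leaves implicit; it is a harmless and valid refinement.
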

\begin{remark}
The bounds in the lemma are closely related to the quantum Cram\'er-Rao bound
based on the symmetric logarithmic derivative \cite{Zhu12the,Zhu15IC}.
\end{remark}
\begin{proof}
Let $C$ be an arbitrary Hermitian operator. Then \begin{align}
\dbra{C}\mathcal{G}(\{M_a\})\dket{C}=\sum_a\frac{\dinner{C}{M_a}\dinner{M_a}{C}}{\tr(M_a)}
=\sum_a\frac{\bigl[\tr\bigl(CM_a^{1/2} M_a^{1/2}\bigr)\bigr]^2}{\tr(M_a)}\leq
\sum_a \tr(C^2M_a)=\tr(C^2),
\end{align}
which implies that $\mathcal{G}(\{M_a\})\leq \mathbf{I}$. Conjugation by
$\bid$ yields $\barcal{G}(\{M_a\})\leq \bid$. By the same token, we have
\begin{align}
\dbra{C}\mathcal{G}(\{\rho_a\})\dket{C}\leq\tr(C^2\rho_B)=\dbra{C}\mathcal{R}_{\rho_B}\dket{C},
\end{align}
which implies that $\mathcal{G}(\{\rho_a\})\leq
\mathcal{R}_{\rho_B}$ and $\barcal{G}(\{\rho_a\})\leq \bid\mathcal{R}_{\rho_B}\bid$.
\end{proof}

\begin{lemma}\label{lem:SumBound}
Any assemblage $\{\rho_{a|x}\}$ satisfies
\begin{equation}
\bar{\tau}(\{\rho_{a|x}\})\leq \sum_x \Tr(\barcal{G}(\{\rho_{a|x}\}_a))=\sum_{a,x}\frac{\tr(\bar{\rho}_{a|x}^2)}{\tr(\rho_{a|x})}.
\end{equation}
The upper bound is saturated if and only if the ensembles in the assemblage
 are mutually orthogonal.
\end{lemma}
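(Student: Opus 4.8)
The plan is to exhibit an explicit feasible superoperator for the semidefinite program that defines $\bar{\tau}$, and then to determine precisely when it is optimal. Write $\barcal{G}_x:=\barcal{G}(\{\rho_{a|x}\}_a)$; each $\barcal{G}_x$ is a positive superoperator, so $\mathcal{F}_0:=\sum_x\barcal{G}_x$ obeys $\mathcal{F}_0-\barcal{G}_x=\sum_{y\neq x}\barcal{G}_y\geq0$ for every $x$ and is therefore feasible. This already gives $\bar{\tau}(\{\rho_{a|x}\})\leq\Tr(\mathcal{F}_0)=\sum_x\Tr(\barcal{G}_x)$, while the displayed equality in the lemma follows by expanding $\barcal{G}_x$ term by term and using $\Tr(\douter{\bar{\rho}_{a|x}}{\bar{\rho}_{a|x}})=\tr(\bar{\rho}_{a|x}^2)$. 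So the inequality is immediate and only the saturation condition needs work.

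For the "if" part, assume the ensembles are mutually orthogonal, that is, the supports $S_x$ of the $\barcal{G}_x$ are pairwise orthogonal subspaces of the Hilbert--Schmidt space. Let $\Pi_x$ be the projector onto $S_x$ and adjoin $\Pi_0:=\mathbf{I}-\sum_x\Pi_x\geq0$ to get a resolution of the identity. Every feasible $\mathcal{F}$ is positive (being $\geq\barcal{G}_x\geq0$), and compression gives $\Pi_x\mathcal{F}\Pi_x\geq\Pi_x\barcal{G}_x\Pi_x=\barcal{G}_x$, hence $\Tr(\Pi_x\mathcal{F}\Pi_x)\geq\Tr(\barcal{G}_x)$. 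Summing over $x$ and adding $\Tr(\Pi_0\mathcal{F}\Pi_0)\geq0$ yields $\Tr(\mathcal{F})=\sum_i\Tr(\Pi_i\mathcal{F}\Pi_i)\geq\sum_x\Tr(\barcal{G}_x)$, which together with the upper bound forces $\bar{\tau}=\sum_x\Tr(\barcal{G}_x)$.

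For the "only if" part, suppose two ensembles, say those labelled $x_1\neq x_2$, fail to be mutually orthogonal; I will build a feasible $\mathcal{F}_1$ with strictly smaller trace. Choose unit Hilbert--Schmidt vectors $\dket{v}\in S_{x_1}$ and $\dket{w}\in S_{x_2}$ with $\dinner{v}{w}\neq0$, and a constant $c>0$ small enough that $\barcal{G}_{x_1}\geq c\douter{v}{v}$ and $\barcal{G}_{x_2}\geq c\douter{w}{w}$. Replace these two rank-one pieces of $\mathcal{F}_0$ by their minimal common upper bound $\mathcal{H}:=c\douter{v}{v}+(c\douter{w}{w}-c\douter{v}{v})_{+}$ (positive part), which is supported on $\mathrm{span}\{\dket{v},\dket{w}\}$, satisfies $\mathcal{H}\geq c\douter{v}{v}$ and $\mathcal{H}\geq c\douter{w}{w}$, and has $\Tr(\mathcal{H})=c\bigl(1+\sqrt{1-|\dinner{v}{w}|^{2}}\,\bigr)<2c$. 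Then $\mathcal{F}_1:=\mathcal{F}_0-c\douter{v}{v}-c\douter{w}{w}+\mathcal{H}$ still dominates every $\barcal{G}_x$ --- writing $\mathcal{F}_1=(\barcal{G}_{x_1}-c\douter{v}{v})+(\barcal{G}_{x_2}-c\douter{w}{w})+\sum_{y\neq x_1,x_2}\barcal{G}_y+\mathcal{H}$ makes each required positivity manifest --- while $\Tr(\mathcal{F}_1)=\Tr(\mathcal{F}_0)-c\bigl(1-\sqrt{1-|\dinner{v}{w}|^{2}}\,\bigr)<\sum_x\Tr(\barcal{G}_x)$, so the bound cannot be saturated. (One can reach the same conclusion via SDP duality: the dual program is $\max\{\sum_x\Tr(\mathcal{E}_x\barcal{G}_x)\colon\mathcal{E}_x\geq0,\ \sum_x\mathcal{E}_x=\mathbf{I}\}$, Slater's condition holds, and complementary slackness at the optimum $\mathcal{F}_0$ forces each $\mathcal{E}_x$ to act as the identity on $S_x$, which is consistent with $\sum_x\mathcal{E}_x=\mathbf{I}$ only when the $S_x$ are pairwise orthogonal.)

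The feasibility checks and the trace identity are routine. The one delicate point, and what I expect to be the main obstacle to a fully self-contained write-up, is the quantitative step in the "only if" direction: it rests on the minimal common upper bound of two overlapping rank-one positive operators, equivalently on the identity $\min\{\Tr(\mathcal{F})\colon\mathcal{F}\geq X,\ \mathcal{F}\geq Y\}=\tfrac12\bigl(\Tr(X)+\Tr(Y)+\|X-Y\|_{1}\bigr)$ valid for positive $X,Y$, which in turn uses the fact that $\douter{v}{v}-\douter{w}{w}$ has eigenvalues $\pm\sqrt{1-|\dinner{v}{w}|^{2}}$.
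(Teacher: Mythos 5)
Your proof is correct, and while the inequality and the ``if'' direction essentially coincide with the paper's argument, your ``only if'' direction takes a genuinely different---and in fact more robust---route. The paper reduces the lemma to an auxiliary statement about arbitrary positive operators $A_j$, obtains the upper bound from the same feasible point $\sum_x\barcal{G}(\{\rho_{a|x}\}_a)$, and proves saturation under orthogonality by pinching a feasible superoperator with the support projectors plus the complementary projector, exactly as you do (it additionally establishes uniqueness of the minimizer, which the statement at hand does not require). For the converse, however, the paper asserts that if two of the $A_j$ are not orthogonal there exists a nonzero positive $B$ with $B\le A_1$ and $B\le A_2$, and lowers the trace by subtracting $B$; such a common lower bound exists only when the two supports intersect nontrivially, not merely when they fail to be orthogonal (two non-orthogonal rank-one operators already admit no nonzero common lower bound), so the paper's argument has a gap precisely in the generic case, e.g.\ two rank-one $\barcal{G}$'s arising from qubit projective measurements along non-parallel, non-orthogonal axes. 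Your construction---choose unit vectors $\dket{v},\dket{w}$ in the two supports with $\dinner{v}{w}\neq 0$, scale so that $c\douter{v}{v}\le\barcal{G}_{x_1}$ and $c\douter{w}{w}\le\barcal{G}_{x_2}$, and replace these two pieces of $\sum_x\barcal{G}_x$ by the common upper bound $c\douter{v}{v}+\bigl(c\douter{w}{w}-c\douter{v}{v}\bigr)_+$ of trace $c\bigl(1+\sqrt{1-|\dinner{v}{w}|^2}\,\bigr)<2c$---uses only non-orthogonality, and your verification that the modified $\mathcal{F}_1$ remains feasible is sound; the eigenvalue fact you flag as the delicate point is standard and you have effectively supplied it (the full minimal-trace identity and the SDP-duality remark are not needed). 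The cost is a slightly longer local computation; the payoff is a proof of the saturation criterion that is valid in full generality and repairs the weak step in the paper's own argument.
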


\Lref{lem:SumBound} is an immediate consequence of the following lemma.
\begin{lemma}
Suppose $A_j$ are positive operators in dimension $d$ and
\begin{equation}\label{eq:tfun}
t(\{A_j\}):=\min\{\tr(F)| F\geq A_j\;\forall j\}.
\end{equation}
Then $t(\{A_j\})\leq \sum _j \tr(A_j)$ and the upper bound is saturated if
and only if  the $A_j$ have mutually orthogonal support. In that case, the
operator that attains the minimum in \eref{eq:tfun} is unique and is equal
to $\sum_j A_j$.
\end{lemma}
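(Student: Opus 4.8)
The plan is to prove the two assertions — the upper bound $t(\{A_j\})\le\sum_j\tr(A_j)$ and the characterization of equality — essentially by exhibiting a natural feasible point and then analyzing when it is optimal. First I would observe that $F=\sum_j A_j$ is feasible in \eqref{eq:tfun}: since each $A_j$ is positive, $\sum_j A_j - A_i = \sum_{j\ne i}A_j\ge 0$, so $F\ge A_i$ for every $i$. This immediately gives $t(\{A_j\})\le \tr\bigl(\sum_j A_j\bigr)=\sum_j\tr(A_j)$, which is the inequality.

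For the equality case, the plan is to argue in both directions. For the ``if'' direction, suppose the $A_j$ have mutually orthogonal supports. Then any feasible $F$ must satisfy $F\ge A_j$ for each $j$; restricting to the (mutually orthogonal) support subspaces $\mathcal{H}_j=\operatorname{supp}(A_j)$ and using that these are orthogonal, one gets $\tr(F)\ge \sum_j \tr(P_j F P_j)\ge \sum_j\tr(P_j A_j P_j)=\sum_j\tr(A_j)$, where $P_j$ is the projector onto $\mathcal{H}_j$; here the middle inequality uses $P_j F P_j\ge P_j A_j P_j = A_j$, and the first inequality uses orthogonality of the $P_j$ together with $F\ge 0$ (so the diagonal blocks contribute at most the full trace). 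Combined with the upper bound, this forces $t(\{A_j\})=\sum_j\tr(A_j)$, and moreover forces any optimal $F$ to have all its ``weight'' on $\bigoplus_j\mathcal{H}_j$ and to equal $A_j$ on each block, i.e. $F=\sum_j A_j$, giving uniqueness of the minimizer.

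For the ``only if'' direction, suppose the supports are not mutually orthogonal, say $\tr(A_1 A_2)>0$ after relabeling (more precisely there exist $i\ne j$ with $A_iA_j\ne 0$, equivalently $\operatorname{supp}(A_i)\cap\operatorname{supp}(A_j)\ne\{0\}$ or the supports are merely non-orthogonal). The idea is to show that $F=\sum_j A_j$ is then \emph{not} optimal, hence $t(\{A_j\})<\sum_j\tr(A_j)$. Concretely, I would take a unit vector $\ket{\psi}$ in the overlap region and show that a small decrement, $F'=\sum_j A_j-\epsilon\outer{\psi}{\psi}$, is still feasible for small $\epsilon>0$: one needs $F'\ge A_i$ for all $i$, i.e. $\sum_{j\ne i}A_j\ge \epsilon\outer{\psi}{\psi}$ for all $i$, and the point of the non-orthogonality is that for \emph{every} $i$ the operator $\sum_{j\ne i}A_j$ still has $\ket\psi$ in its support (because at least one of the remaining $A_j$ does not kill $\ket\psi$ — this is exactly where ``supports not mutually orthogonal'' is used to rule out the bad case where some single $A_i$ is the only one supported near $\ket\psi$). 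Since finitely many positive operators each with $\ket\psi$ in their range are bounded below by a positive multiple of $\outer{\psi}{\psi}$ on the relevant subspace, a uniform $\epsilon>0$ works, and then $\tr(F')=\sum_j\tr(A_j)-\epsilon<\sum_j\tr(A_j)$.

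The main obstacle I anticipate is the bookkeeping in the ``only if'' direction: ``supports not mutually orthogonal'' does not literally mean a common eigenvector shared by two of the $A_j$, and one must be careful that the perturbation $\epsilon\outer{\psi}{\psi}$ remains dominated by $\sum_{j\ne i}A_j$ for \emph{all} choices of the excluded index $i$ simultaneously. The clean way to handle this is to pick $\ket\psi$ and an index pair $(i_0,j_0)$ witnessing non-orthogonality, choose $\ket\psi$ in $\operatorname{supp}(A_{i_0})\cap\operatorname{supp}(A_{j_0})$ with $\bra\psi A_{i_0}\ket\psi>0$ and $\bra\psi A_{j_0}\ket\psi>0$; then for any excluded $i$, at least one of $A_{i_0},A_{j_0}$ survives in $\sum_{j\ne i}A_j$ and bounds it below by $c\,\outer{\psi}{\psi}$ on $\mathrm{span}\{\ket\psi\}$ with a uniform $c>0$, while on $\ket\psi^{\perp}$ the perturbation vanishes — so feasibility of $F'$ reduces to a $1$-dimensional check. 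Once this is in place, everything else is routine linear algebra.
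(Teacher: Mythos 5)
Your feasibility bound and your ``if'' direction are essentially the paper's argument: $F=\sum_jA_j$ is feasible, and for mutually orthogonal supports pinching any feasible $F$ with the support projectors $P_j$ gives $\tr F\geq\sum_j\tr(P_jFP_j)\geq\sum_j\tr(A_j)$. One minor omission in the uniqueness step: knowing that an optimal $F$ vanishes on the orthocomplement of $\bigoplus_j\mathcal{H}_j$ and satisfies $P_jFP_j=A_j$ does not by itself give $F=\sum_jA_j$; you still have to kill the cross blocks $P_jFP_k$ for $j\neq k$. As in the paper, this follows by applying the ``zero diagonal block of a positive operator kills its row and column'' argument to $F-A_j\geq0$, whose $(j,j)$ block vanishes; it does not follow from positivity of $F$ alone.

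The genuine gap is in the ``only if'' direction, and it is exactly the obstacle you flagged but did not resolve. Non-orthogonal supports need not intersect: take $A_1=\outer{0}{0}$ and $A_2=\outer{+}{+}$ in a qubit, so $\tr(A_1A_2)=1/2>0$ while $\operatorname{supp}A_1\cap\operatorname{supp}A_2=\{0\}$. Your vector $\ket{\psi}$ does not exist here, and no patch of the same shape can work: feasibility of $F'=A_1+A_2-\epsilon\outer{\psi}{\psi}$ forces $\epsilon\outer{\psi}{\psi}\leq A_1$ and $\epsilon\outer{\psi}{\psi}\leq A_2$, and the only positive operator below both is $0$ (any positive operator below $\outer{0}{0}$ is $c\outer{0}{0}$, and $\outer{+}{+}-c\outer{0}{0}$ has negative determinant for every $c>0$). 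So the strategy ``stay below $\sum_jA_j$ and shave trace'' cannot prove this case, even though the conclusion is true: $F=a\outer{u}{u}+b\outer{v}{v}$ in the bisector basis is feasible with $\tr F=1+1/\sqrt{2}<2$, an optimum that is not comparable to $A_1+A_2$. Appending further operators that are mutually orthogonal and orthogonal to this pair does not rescue the perturbation either. To be fair, the paper's own proof of this direction uses the same subtraction idea, asserting a nonzero positive $B\leq A_1,A_2$ whenever $A_1$ and $A_2$ are non-orthogonal; that assertion fails for the same example, so the missing idea is absent from both texts. A correct route: reduce to the offending pair by adding $\sum_{j\geq3}A_j$ to any operator dominating $A_1$ and $A_2$; then pick unit vectors $u\in\operatorname{supp}A_1$, $v\in\operatorname{supp}A_2$ with $c=|\inner{u}{v}|>0$ and $\epsilon>0$ with $\epsilon\outer{u}{u}\leq A_1$, $\epsilon\outer{v}{v}\leq A_2$, and replace $\epsilon\outer{u}{u}+\epsilon\outer{v}{v}$ by the minimal operator dominating both rank-one pieces, whose trace is $\epsilon(1+\sqrt{1-c^2})<2\epsilon$. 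Alternatively, use the SDP dual $t(\{A_j\})=\max\{\sum_j\tr(A_jX_j)\,:\,X_j\geq0,\ \sum_jX_j=1\}$: saturation forces $\tr(A_jX_j)=\tr(A_j)$, hence $X_j\geq P_j$, hence $P_j+P_k\leq1$ for $j\neq k$, which is equivalent to orthogonality of the supports.
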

\begin{proof}
Let $C=\sum_j A_j$; then $C\geq A_j$ for all $j$. So
\begin{equation}\label{aeq:tupperbound}
t(\{A_j\})\leq \tr(C)= \sum _j \tr(A_j).
\end{equation}

Let $E$ be an operator that attains the minimum in \eref{eq:tfun},  $P_j$
the projector onto the support of $A_j$, and $P^\bot$  the projector onto
the kernel of $\sum_j A_j$.  If the $A_j$ are mutually orthogonal, then the $P_j$
are mutually orthogonal and orthogonal to $P^\bot$. In addition, $P^\bot+\sum_j
P_j=1$ and $P_j E P_j\geq P_jA_jP_j=A_j$, so that
\begin{align}
t(\{A_j\})&=\tr(E)= \tr(P^\bot E P^\bot) +\sum_j \tr(P_j E P_j)\geq \sum_j \tr(P_j
E P_j)\geq\sum_j\tr(A_j),
\end{align}
which implies that $t(\{A_j\})= \sum _j \tr(A_j)$ given \eref{aeq:tupperbound}.
Furthermore,  saturating these inequalities implies that  $P^\bot E P^\bot=0$
and $P_j E P_j=A_j$ for all $j$. Given that $E-A_j$ is positive semidefinite
for all $j$, we conclude that $P^\bot E P_j=P_j E P^\bot =0$ for all
$j$ and $P_j E P_k=0$ for all $j\neq k$. Consequently, $E=\sum_j P_j E P_j=\sum_j
A_j$.

If two of the $A_j$, say $A_1$ and $A_2$, are not orthogonal, then there
exists a (nonzero) positive operator $B$ that satisfies $B\leq  A_1$ and $B\leq
A_2$. Let $C=\sum_j A_j-B$; then $C\geq A_j$ for all $j$. So
\begin{equation}
t(\{A_j\})\leq \tr(C)= \sum _j \tr(A_j)-\tr(B)< \sum _j \tr(A_j).
\end{equation}
\end{proof}

\section{Steering measures under transposition}
Here we show that $\tau(\{\rho_{a|x}\})$ and $\bar{\tau}(\{\rho_{a|x}\})$
are invariant under transposition of $\rho_{a|x}$.
\begin{lemma}\label{lem:tauTranspose}
\begin{equation}
\tau(\{\rho_{a|x}^\rmT\})=\tau(\{\rho_{a|x}\}), \quad \bar{\tau}(\{\rho_{a|x}^\rmT\})=\bar{\tau}(\{\rho_{a|x}\}).
\end{equation}
\end{lemma}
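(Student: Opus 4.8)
The plan is to show that transposition acts as an isometry on the Hilbert–Schmidt space that is compatible with all the structure appearing in the definitions of $\tau$ and $\bar\tau$. Concretely, let $\Theta$ be the superoperator defined by $\Theta\dket{S}=\dket{S^\rmT}$. First I would record the elementary facts: $\Theta$ is a real-linear involution, it is unitary with respect to the Hilbert–Schmidt inner product (since $\tr(S^\rmT T^\rmT)=\tr((TS)^\rmT)=\tr(TS)=\tr(S^\dag T)$ for Hermitian $S,T$, and more generally $\dinner{S^\rmT}{T^\rmT}=\overline{\dinner{S}{T}}$), and it fixes the identity, $\Theta\dket{1}=\dket{1}$, hence commutes with the projector $\bid$ onto traceless operators. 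Also $\tr(\rho_{a|x}^\rmT)=\tr(\rho_{a|x})$, and $\overline{\rho_{a|x}^\rmT}=\bar\rho_{a|x}^{\,\rmT}$.

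Next I would use these facts to compute how $\mathcal G$ transforms. For any ensemble $\{\rho_a\}$,
\begin{equation}
\mathcal G(\{\rho_a^\rmT\})=\sum_a\frac{\douter{\rho_a^\rmT}{\rho_a^\rmT}}{\tr(\rho_a)}=\Theta\Bigl(\sum_a\frac{\douter{\rho_a}{\rho_a}}{\tr(\rho_a)}\Bigr)\Theta^\dag=\Theta\,\mathcal G(\{\rho_a\})\,\Theta^\dag,
\end{equation}
where I have used $\douter{\rho_a^\rmT}{\rho_a^\rmT}=\Theta\douter{\rho_a}{\rho_a}\Theta^\dag$, which holds because $\Theta\dket{\rho_a}=\dket{\rho_a^\rmT}$ and, for the bra, $\dbra{\rho_a}\Theta^\dag=\dbra{\rho_a^\rmT}$ (using that $\rho_a$ is Hermitian so that complex-conjugation issues do not arise). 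The same computation with $\bar\rho_a$ in place of $\rho_a$ gives $\barcal G(\{\rho_a^\rmT\})=\Theta\,\barcal G(\{\rho_a\})\,\Theta^\dag$.

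Finally I would feed this into the variational definition \eqref{eq:tau}. Conjugation by the Hilbert–Schmidt-unitary $\Theta$ preserves positivity of superoperators and the superoperator trace, so $\mathcal F\geq\mathcal G(\{\rho_{a|x}\}_a)$ for all $x$ if and only if $\Theta\mathcal F\Theta^\dag\geq\mathcal G(\{\rho_{a|x}^\rmT\}_a)$ for all $x$, and $\Tr(\mathcal F)=\Tr(\Theta\mathcal F\Theta^\dag)$. Hence the two minimization problems defining $\tau(\{\rho_{a|x}\})$ and $\tau(\{\rho_{a|x}^\rmT\})$ have the same feasible values, giving $\tau(\{\rho_{a|x}^\rmT\})=\tau(\{\rho_{a|x}\})$; the identical argument with $\barcal G$ gives the statement for $\bar\tau$. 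The only mild subtlety — the step I would be most careful about — is the interaction of $\Theta$ with Hermitian conjugation, i.e.\ verifying cleanly that $\dbra{\rho_a}\Theta^\dag=\dbra{\rho_a^\rmT}$ and hence that $\Theta$ intertwines the outer products as claimed; this is where one must use that the $\rho_{a|x}$ (and the optimal $\mathcal F$, which may be taken Hermitian) are Hermitian, so that transpose equals complex conjugate of the adjoint in a way that makes all signs work out. Everything else is a routine transport of the proof of \thref{thm:USI} through the unitary $\Theta$.
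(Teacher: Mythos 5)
Your argument is correct and is essentially the paper's own proof in different clothing: the paper takes an optimal $\mathcal F=\sum_j r_j\douter{R_j}{R_j}$ and replaces it by $\mathcal F_\rmT=\sum_j r_j\douter{R_j^\rmT}{R_j^\rmT}$, which is exactly your $\Theta\mathcal F\Theta^\dag$, then uses preservation of the ordering and of the trace plus symmetry, just as you do. One small correction to your parenthetical: transposition is a \emph{complex-linear} Hilbert--Schmidt unitary with $\dinner{S^\rmT}{T^\rmT}=\dinner{S}{T}$ (no complex conjugation), so the intertwining $\Theta\douter{\rho_a}{\rho_a}\Theta^\dag=\douter{\rho_a^\rmT}{\rho_a^\rmT}$ and the trace invariance hold for arbitrary operators and the Hermiticity caveat you flag is not actually needed.
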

\begin{proof}
Let $\mathcal{F}$ be a superoperator that satisfies $\mathcal{F}\geq \mathcal{G}(\{\rho_{a|x}\})$
for all $x$ and that $\Tr(\mathcal{F})=\tau(\{\rho_{a|x}\})$. Decompose $\mathcal{F}$
as $\mathcal{F}=\sum_j r_j (\douter{R_j}{R_j})$ and let $\mathcal{F}_\rmT=\sum
r_j (\douter{R_j^\rmT}{R_j^\rmT})$. Then we have
\begin{equation}
\sum_j  r_j( \douter{R_j}{R_j})\geq \sum_a \frac{\douter{\rho_{a|x}}{\rho_{a|x}
}}{\tr(\rho_{a|x})}\quad  \forall x,
\end{equation}
which implies that
\begin{equation}
\mathcal{F}_\rmT\geq \sum_a \frac{\douter{\rho_{a|x}^\rmT}{\rho_{a|x}^\rmT
}}{\tr(\rho_{a|x})}=\mathcal{G}(\{\rho_{a|x}^\rmT\}_a) \quad \forall x.
\end{equation}
Therefore, $\tau(\{\rho_{a|x}^\rmT\})\leq \Tr(\mathcal{F}_\rmT)=\Tr(\mathcal{F})=\tau(\{\rho_{a|x}\})$.
By symmetry we also have $\tau(\{\rho_{a|x}\})\leq \tau(\{\rho_{a|x}^\rmT\})$.
It follows that $\tau(\{\rho_{a|x}^\rmT\})=\tau(\{\rho_{a|x}\})$. The equality
$\bar{\tau}(\{\rho_{a|x}^\rmT\})=\bar{\tau}(\{\rho_{a|x}\})$ follows from
the same reasoning.
\end{proof}

\section{Steerability of isotropic states}
Consider a family  isotropic states in dimension $d\times d$ parametrized by the parameter $\alpha$
with $0\leq \alpha\leq 1$,
\begin{equation}
\rho(\alpha)=(1-\alpha)\frac{1}{d^2}+\alpha |\Phi\rangle\langle \Phi|,
\end{equation}
where $|\Phi\rangle=(\sum_j |jj\rangle )/\sqrt{d}$.
Suppose Alice has the measurement assemblage $\{A_{a|x}\}$; then Bob has
the state assemblage $\{\rho_{a|x}\}$ with
\begin{equation}
\rho_{a|x}=\tr_A[(A_{a|x}\otimes 1)\rho(\alpha)]=\frac{1}{d}\Bigl[\alpha A_{a|x}^\rmT+\frac{1-\alpha}{d}\tr(A_{a|x})\Bigr],
\end{equation}
which may be seen as a coarse graining of the assemblage $\{A_{a|x}^\rmT/d\}$.
Calculation shows that
\begin{equation}
\bar{\tau}(\{\rho_{a|x}\})=\frac{\alpha^2}{d}\bar{\tau}(\{A_{a|x}^\rmT\})=\frac{\alpha^2}{d}\bar{\tau}(\{A_{a|x}\}),
\end{equation}
where the second equality follows from \lref{lem:tauTranspose}.
The isotropic state is steerable with respect to $\{A_{a|x}\}$ when $\alpha^2\bar{\tau}(\{A_{a|x}\})>
(d-1)$. If the measurement assemblage for Alice  is composed of $m$ MUB, then
$\bar{\tau}(\{A_{a|x}\})=m(d-1)$, so the isotropic state is steerable if
$m\alpha^2>1$. In the case of two qubits,  this condition turns out to be both
sufficient and necessary \cite{CavaJWR09,KogiSCA15}. Note that a two-qubit isotropic
state is equivalent to a Werner state under a local unitary transformation.

\end{document}